\newtheorem{theorem}{Theorem}[section]    
\newtheorem{proposition}{Proposition}[section]    
\newtheorem{prop}{Proposition}[section]
\newtheorem{lemma}{Lemma}[section]
\theoremstyle{remark}
\newtheorem{Remark}{Remark}[section]
\numberwithin{equation}{section}
\newcommand{\Q}{\mathbb{Q}}
\newcommand{\R}{\mathbb{R}}
\newcommand{\ud}{\mathrm{d}}
\newcommand{\esp}[1]{\ensuremath{\mathbb{E} \!\! \left[#1\right] }}
\newcommand{\espQ}[1]{\ensuremath{\mathbb{E}^\mathbb{Q}\left[#1\right] }}
\newcommand{\EFp}[2]{\ensuremath{\mathbb{E}_{#1}\!\!\left[#2\right] }}
\newcommand{\E}{\mathbb{E}} 
\newcommand{\cF}{\mathcal{F}}
\renewcommand{\P}{\mathbb{P}}
\newcommand{\mc}[1]{\mathcal{#1}}
\newcommand{\bm}[1]{\mathbf{#1}}
\newcommand{\sprse}[4]{\pi_{\mc{V}_#1}^{#2}[#3](#4) }
\title{CEMRACS: A sparse grid approach to  balance sheet risk measurement}
\author{Cyril Bénézet\footnote{Paris Diderot University, LPSM.}, Jérémie Bonnefoy\footnote{Group Risk Management, GIE AXA.}, Jean-François Chassagneux\textsuperscript{*}, \\ Shuoqing Deng\footnote{Paris Dauphine University, CEREMADE.},  Camilo Garcia Trillos\footnote{University College London.}, Lionel Len\^otre\footnote{Ecole Polytechnique, CMAP.} }
\begin{document}

\maketitle

\begin{abstract}
In this work, we present a numerical method based on a sparse grid approximation to compute the loss distribution of the balance sheet of a financial \textcolor{black}{or an insurance} company.  We first describe, in a stylised way, the assets and liabilities dynamics that are used for the numerical estimation of the balance sheet distribution. For the pricing and hedging model, we chose a classical Black \& Scholes model with a stochastic interest rate following a Hull \& White model. The risk management model describing the evolution of the parameters of the pricing and hedging model is a Gaussian model. The new numerical method is compared with the traditional nested simulation approach. We review the convergence of both methods to estimate the risk indicators under consideration. Finally, we provide numerical results  showing that the sparse grid approach is extremely competitive for models with moderate dimension. 
\end{abstract}


\section{Introduction}
The goal of this paper is to present a robust and efficient method to numerically assess risks on the balance sheet distribution of, say, an insurance company, at a given horizon. In practice, it is chosen to be one year\textcolor{black}{, consistently with the Solvency 2 regulation, the prudential framework for assessing the required solvency capital for an European insurance company.}

On a filtered probability space $(\Omega,\mathcal{A},\P, (\mathcal{F}_t)_{t \ge 0})$, the balance sheet of the company is a random process summarised, at any time $t \ge 0$, by the value of the assets of the company $(A_t)_{t \ge 0}$ and the value of the liabilities $(L_t)_{t \ge 0}$. The quantity of interest is the Profit and Loss (PnL in the sequel) associated to the balance sheet, which is given by
\begin{align*}
P_t = L_t - A_t\;,\quad t \ge 0\,.
\end{align*}
By convention, and adopting the point of view of risk management, we measure the loss as a positive quantity.

On the Liability side, the insurance company has sold a structured financial product which depends on the evolution of a one-dimensional stock price $(S_t)$ and the risk-free interest rate $(r_t)$. \textcolor{black}{Several insurance products could be of this type, in particular Unit-Linked (with or without financial guarantees) and Variable Annuity contracts. For those contracts, client's money is invested in equity and bond markets while the insurance company might also provide with financial guarantees similar to long-term put options. The long maturity of those contracts requires the introduction of a model for interest rate as they are very sensitive to Interest Rate curve movements.} The value $L_1$ is just the price of this product taking into account the value of some risk factors $\mathcal{X}_1$ (stock price, interest rate curve etc.) at time $t=1$ used to calibrate the pricing model.

On the Asset side, the insurance company manages some assets to hedge the risk associated to the product sale. The pricing actually includes a margin which is secured through hedging. The hedging assets are the stock and swaps of several maturities, in practice mostly concentrated on the long term. In practice, bond futures are also included sometimes. The hedging portfolio is typically rebalanced on a weekly basis and the hedging quantities are determined by a financial model, taken to be the same as the liability pricing model, whose inputs are the risk factors $\mathcal{X}_t$ at the time $t$ when the hedge is computed.

We describe precisely in Section \ref{sec:2}, the pricing and hedging model, the dynamics of the risk factor $\mathcal{X}$  and the value of the asset and liability side of the balance sheet. Let us stress that the risk factor model is given under the so-called \emph{real-world} probability measure $\P$, \textcolor{black}{which might be objectively calibrated using time series of financial markets or represents the management view}. This \emph{real-world} model may be --and most of the time is-- completely different from the pricing and hedging model \textcolor{black}{which might be simplified for runtime/trackability purposes, prudent (pricing and hedging include a margin) or being constrained by regulation.}


Our goal is then to compute various risk indicator for the loss distribution of the balance sheet at one year namely the distribution of $P_1$ under the \emph{real-world} probability measure $\P$, that we denote hereafter $\eta$.

Precisely, we measure the risk associated to $\eta$ using a (law invariant) risk measure defined over the class of square integrable measures $\varrho:\mathcal{P}_2(\R)\rightarrow \R$. First, we consider for $\varrho$ the so called Value-at-Risk ($V@R$), which is defined by the left-side quantile:
\begin{align} \label{eq de V@R}
V@R_p(\eta) = \inf \left \{ q \in \R \; | \; \eta\left( (-\infty,q] \right) \ge p\right \}\,.
\end{align}
We will also work with the class of \emph{spectral risk measures}: a spectral risk measure is defined as
\begin{align} \label{eq de spectral risk meas}
\varrho_h(\eta) = \int_0^1 V@R_p(\eta) h(p) \ud p\;,
\end{align}
where $h$ is a non-decreasing probability density on $[0,1]$.
In the numerics, we will  focus on the Average Value-at-Risk $(AV@R)$ which is given by
\begin{align} \label{eq de AV@R}
AV@R_\alpha(\eta) = \frac{1}{1-\alpha} \int_\alpha^1 V@R_p(\eta) \ud p\;,
\end{align}
and is a special case of a spectral risk measure.

For a law invariant risk measure $\varrho$, we denote by $\Re$ its ``lift'' on $L^2(\Omega,\mathcal{A},\P;\R)=:L^2$, namely $\Re[X] = \varrho([X])$ for any $X \in L^2$, where $[X]$ denotes the law of $X$. The lift $\Re_h$ from a spectral risk measure $\varrho_h$ satisfies the following properties:
\begin{enumerate}
\item \emph{Monotonicity}: $\Re_h[X] \le \Re_h[Y]$, for $X \le Y \in L^2$;
\item \emph{Cash invariance}: $\Re_h[X+c]=\Re_h[X]+c$ for $X \in L^2$ and $c \in \R$;
\item \emph{Positive homogeneity}: $\Re[tX] = t \Re[X]$, $t \ge 0$ and $X \in L^2$.
\item \emph{Convexity}: $\Re[tX + (1-t)Y] \le t\Re[X] + (1-t)\Re[Y]$, whenever $0\le t \le 1$, for $X,Y \in L^2$;
\end{enumerate}
Let us stress the fact that $V@R$ only satisfies 1-3. \textcolor{black}{We refer to \cite{pichler2013evaluations}  and the references therein for more insights on risk measures and spectral risk measures.}

In our setting, the loss distribution $\eta$ of the balance sheet PnL  is obtained through the following expression:
\begin{align*}
\eta = p_1\sharp \nu \;,
\end{align*} 
where $\sharp$ denotes the push-forward operator, $p_1:\R^\theta \rightarrow \R$ is the \emph{function} describing the PnL in terms of the risk factors, and $\nu$ stands for the distribution of the risk factors $\mathcal{X}$.
In practice the estimation of $\varrho(\eta)$ requires to sample from $\eta$. In turn, this demands for a sample of the model parameter distribution $\nu$ and for a numerical approximation of $p_1$. In this note, we compare two main approaches to form the sample of $\eta$ given one of $\nu$. 

The first one 
is known as the \emph{nested simulation} approach: It is a two-step method. First, a set of ``outer simulation'', describing the random values of the risk factors, is drawn. Then, for each value of the risk factors, a sample of ``inner simulation'' is drawn to compute the various hedge and prices. In this approach, all computations are realised ``online''. The main advantage of this approach is its simplicity to implement in practice, described in the first paragraph of Subsection \ref{su:est}. However, it is well known that this approach is quite greedy, even if optimised as in \cite{gordy2010nested}. We also want to stress the fact that when computing the $\eta$-sample, no information about $p_1$ is stored for future work: for example if $\nu$ is modified, due to time or a model change, a full recalculation would be required.

The other approach we chose to adopt and would like to promote is a \emph{grid approach} where the approximation of $p_1$ is made ``offline'', by a Monte Carlo approach, and then stored. The numerical computation is then done through a (multi-linear) interpolation on a grid. The main drawback of this approach is that the size of the grid, in high dimension, can become untractable, especially if one uses regular grid. To partially circumconvent this difficulty, we introduce a \emph{sparse grid} \cite{bungartz_sparse_2004} which reduces drastically the number of point to be used (equivalently, values to be stored) with only small reduction of the accuracy of the method.

We prove that, for a spectral risk measure, the two approaches give an estimation of $\varrho(\eta)$ which converges to the true value, see Theorem \ref{thm}.

Furthermore, we show in the numerical Section \ref{se:num} that using the grid approach together with a sparse grid of low level allows to get a good approximation of the loss distributions $\eta$, and of some related risk measures, while reducing drastically the computational time and allowing to keep information about the balance sheet function $p_1$. Last, this permits to numerically quantify uncertainty. Indeed, since the computations on the grid are stored, the computation of the distribution of the PnL under other distributions for the parameters is almost instantaneous and can be compared with the results obtained with the initial one. An application to uncertainty estimation is given in the last numerical application.

The rest of the paper is organised as follows.
In the Section 2, we first describe the mathematical models that are used to describe the evolution of the prices under the risk-neutral measure $\Q$. We then describe precisely how $A$ and $L$ are specified. 
In Section 3, we describe the two numerical methods used to compute $L_t$ and $A_t$ at any given time $t \ge 0$. In particular, we show how to efficiently compute, at time $t$, the quantities to hold in the hedging portfolio, which are expressed in term of the derivatives of the claim's price. We also explain how to compute the price of the product and of the assets used to construct the hedging porfolio, leading to the computation of $L_t$ and $A_t$. We show how to obtain an approximation of the distribution of $P_1$ under the physical measure $\P$, and we prove an upper bound for the mean square error of the overall procedure. Finally, in Section 4, we present our numerical results, comparing the two methods.

\section{Financial Model} \label{sec:2}

In this section, we give the precise specification of the asset and liability sides of the balance sheet.  We also present the \emph{risk-neutral} model and the \emph{real-world} model that are used.

\subsection{Description of the sold product}\label{sub:desc_prod}

Let us assume that a company sells a contingent claim at time $t=0$ which is a (discretely) path-dependent option with a payoff function $G$ paid at the maturity $T > 0$, 
depending upon the evolution of a one-dimensional risky asset's price $S$. We focus here on:\\
%
%
%
	A put lookback option, that is a discretely path-dependent option whose strike at maturity $T$ is given by the maximum of the asset's price $S$ over the times $t \in \{\tau_0 = 0, \tau_1, \cdots, \tau_{\kappa}=T\}$ where $\kappa \ge 1$:
	\begin{align}\label{eq de payoff}
		G(S_{\tau_0}, \dots, S_{\tau_\kappa}) = \left( \max_{0 \le \ell \le \kappa} S_{\tau_\ell} \right) - S_T.
	\end{align}

\begin{Remark}
  The proxy provided above is close to financial guarantees offered in Variable Annuity contracts. Those contracts are structured insurance products composed of a fund investment on top of which both insurance and financial protection are added. In our case, the contract is a \emph{Guaranteed Minimum Accumulation Benefit including a ratchet mechanism}. At time $t=0$, the customer invests his/her money in the underlying fund and will receive at a given maturity the maximum between the terminal fund value and its \emph{terminal benefit base} in case she is still alive. The terminal benefit base is equal to the maximum of the underlying fund values observed at each anniversary date of the contract (ratchet mechanism). We do not consider the modeling of death/survival in this proceedings, neither the possibility that client can surrender at any time during the life of the contract.
\end{Remark}

\subsection{Market model under the risk-neutral measure}\label{se:modelQ}

We assume that all pricing and hedging is done with a market risk-neutral measure $\Q$.

The derivative with a payoff function $G$ as above depends upon a one-dimensional stock's price $S = (S_t)_{t \in [0,T]}$. We assume here that the dynamics of the asset under $\Q$ are of the Black \& Scholes type as described in Section \ref{sub:stock_model}  with a stochastic interest rate $r = (r_t)_{t \in [0,T]}$ which follows a Hull \& White model. 

\textcolor{black}{As the payoff G is a proxy of Variable Annuity guarantee which is a long term Savings product (in practice maturity ranges from 10 to 30 years depending on product type), the modeling of interest rate is essential as the product and therefore the overall balance sheet of the company is very sensitive to this risk.}

\subsubsection{The short rate model}\label{sub:short_rate}

Let $\Theta \in \R^d$ ($d := 3$ in the sequel) be a set of parameters representing some market observations. The short rate evolution is governed by the Hull \& White dynamics
\begin{align} \label{HW}
	r^{t,\Theta}_s = r^{t,\Theta}_t + \int_t^s a \left( \mu^{t,\Theta}_u - r^{t,\Theta}_u \right) \ud u + b \left(B_s - B_t\right), \ s \in [t,T],
\end{align}
where $B$ is a $\Q$-Brownian motion, $a$ and $b$ are real constants and $\mu^{t,\Theta}:[t,T]\to\R$ is a function. We refer to \cite{BM07} for a more complete analysis of the Hull \& White short rate model.

The parameter $\mu^{t,\Theta}$ is calibrated using the market observations $\Theta$, so that the model reproduces the interest rate curve observed on the market. It is given by 
  \begin{align} \label{calibHW}
    \mu^{t,\Theta}_s = f^\Theta(t,s) + \frac{1}{a} \frac{\partial f^\Theta(t,s)}{\partial s} + \frac{b^2}{2a^2} \left( 1 - e^{-2a(s-t)} \right), s \in [t,T].
  \end{align}

We refer to the Appendix for a derivation of \eqref{calibHW}. 



As a consequence, the $\Theta$ parameter must be chosen in order to represent adequately the forward rate curve observed on the market.

We suppose here that the forward rate curve $f^\Theta(t,\cdot)$ is directly observed and is a linear combination of three elementary functions $h^{t,1}, h^{t,2},h^{t,3}$ from $[t,T]$ to $\R$, given by
\begin{equation*}
	h^{t,1}(s) := h^1(s-t), \
	h^{t,2}(s) := h^2(s-t), \text{ and } 
	h^{t,3}(s) := h^3(s-t), \ s \in [t,T],
\end{equation*}
where, for $u \in [0,T]$:
\begin{gather*}
	h^1(u) =
	\begin{cases}
		1 &\mbox{ if } u \le \frac{t_1+t_2}{2}, \\
		2 \frac{\frac{t_2+t_3}{2} - u}{t_3 - t_1} &\mbox{ if } u \in[\frac{t_1+t_2}{2}, \frac{t_2+t_3}{2}], \\
		0 &\mbox{ otherwise,}
	\end{cases} \ \ \ \
	h^3(u) =
	\begin{cases}
		0 &\mbox{ if } u \le \frac{t_2+t_3}{2} \\
		2 \frac{u - \frac{t_2+t_3}{2}}{t_4-t_2} &\mbox{ if } u \in [\frac{t_2+t_3}{2}, \frac{t_3+t_4}{2}], \\
		1 &\mbox{ otherwise,}
	\end{cases}\\ \\
	\text{and } h^2(u) = 1 - h^1(u) - h^3(u),
\end{gather*}
where $0 \le t_1 < t_2 < t_3 < t_4 \le T$ are four fixed real numbers.

\begin{figure}[h]
	\centering
	\includegraphics[scale=0.5]{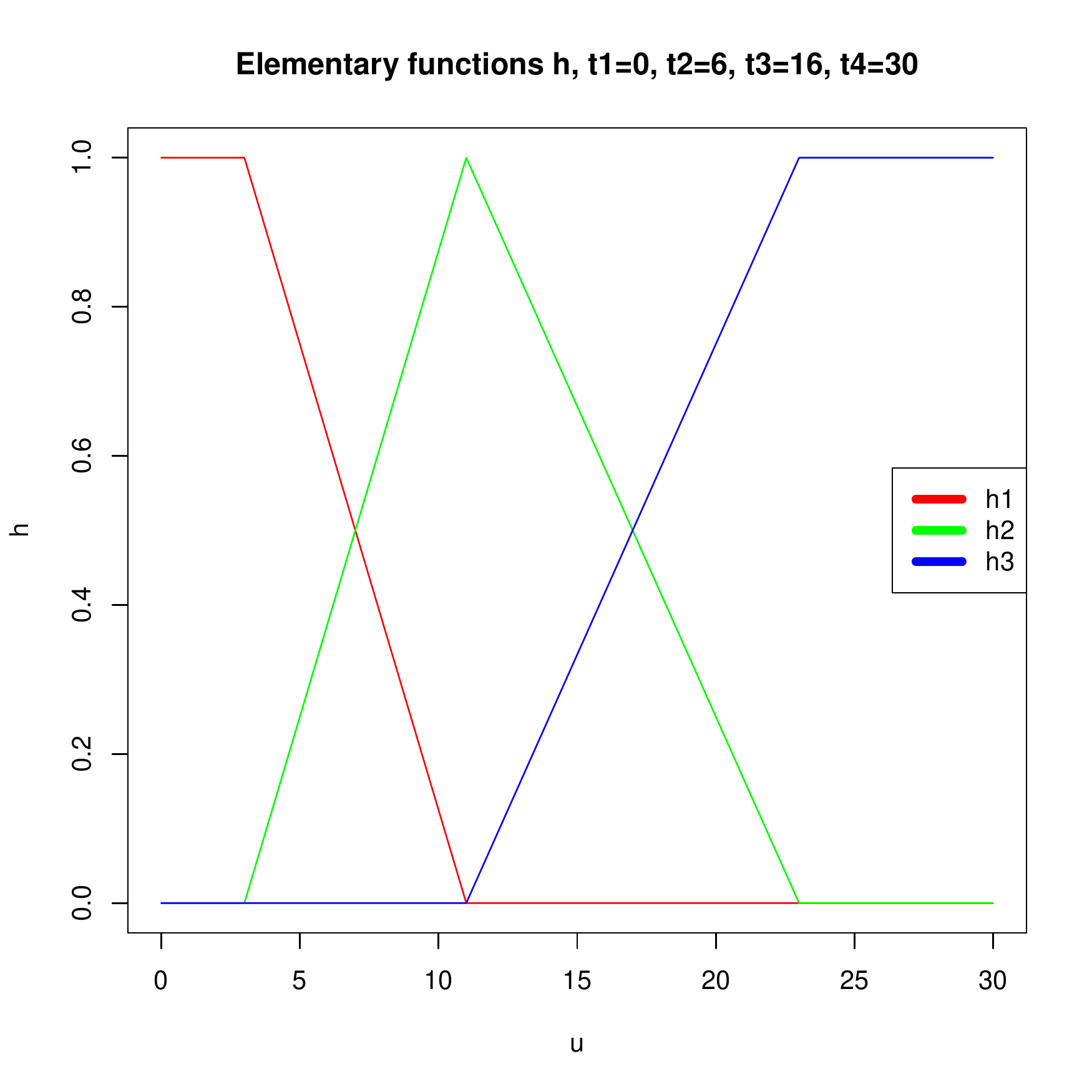}
	\caption{\footnotesize Building blocks for the forward interest rate curve.} \label{fig:h_func}
\end{figure}

The function $h^{t,1}$ (resp. $h^{t,2}, h^{t,3}$) model the short (resp. middle, long) term structure of the interest rates curve.

In a nutshell, the short rate model is determined by:
\begin{enumerate}[topsep=0pt,parsep=0pt,itemsep=0pt]
	\item the time of observation $t \in [0,T]$,
	\item the three-dimensional parameter $\Theta := \{\theta_1, \theta_2, \theta_3\} \in \R^3$, where $\theta_1,\theta_2,\theta_3$ are such that
	\begin{equation}\label{forward-curve}
		f^\Theta(t,\cdot) = \theta_1 h^{t,1}  + \theta_2 h^{t,2} + \theta_3 h^{t,3},
	\end{equation}
	$f^\Theta(t,\cdot)$ being the observed forward rates curve.
\end{enumerate}

As a result, given an observation $(t,\Theta)$ as above, the short rate process $r^{t,\Theta}$ has the dynamics  \eqref{HW}, where $\mu^{t,\Theta}$ is computed using \eqref{calibHW}.

\begin{Remark}
  In practice, the parameters $a,b$ appearing in \eqref{HW} can be calibrated so that the model reproduces the prices, observed on the market, of some contracts such as swaps or swaptions.
  We could more generally allow the parameters $a,b$ of the Hull \& White model to depend upon the market observations $\Theta$.
  The parameter $\Theta$ should live in a higher-dimensional space to take into account the observed swap(tion)s prices. 
  Regular recalibration of parameters is largely performed by practitioners, in particular when they perform dynamic hedging.
\end{Remark}

There are several reasons explaining the choice of this model. First of all, it is quite simple to calibrate using the data. In fact, the function $\mu$ is directly given as a function of the forward rate curve. We should note again that the choice of keeping $a,b$ fixed through time simplifies the calibration. 
\noindent Secondly, we will see later in Proposition \ref{propSimQ} that this short rate model, associated to the stock model described below, leads to an exact simulation under the risk-neutral measure.
\noindent Lastly, closed and easily tractable formulas can be obtained for the prices of the zero-coupon bonds and swaps which are the products used to construct the hedging portfolio and then to compute the value of the company's assets $A$. 

These prices are as follows.




\begin{prop}\label{prop:pricesIR}
  Let $(t,\Theta) \in [0,1] \times \R^3$ be a market observation, and consider the process $\left(r^{t,\Theta}_s\right)_{s \in [t,T]}$ given by \eqref{HW}, where the parameter $\mu^{t,\Theta}$ is defined with \eqref{calibHW} and \eqref{forward-curve}.
  \begin{enumerate}
  \item The price at time $t$ of a zero-coupon maturing at time $u \in [t,T]$ is given by:
    \begin{align}\label{eq:price0bond}
        P^{t,\Theta,u} = \exp{\left(-\int_t^u f^{\Theta}(t,s) \ud s\right)},
    \end{align}
    and its derivatives with respect to $\Theta := (\theta_1,\theta_2,\theta_3)$ are given by:
    \begin{align}\label{eq:deriv0bond}
      \frac{\partial P}{\partial \theta_i}^{t,\Theta,u} = -P^{t,\Theta,u}\int_t^u h^{t,i}(s) \ud s.
    \end{align}
  \item Let $(0,\Theta_0)$ be the observation made at time $0$. Consider a swap contract issued in $s=0$, with maturity $M > 0$, rate $R > 0$, with coupons versed at every time $i \in \left\{1,\dots,M\right\}$. Then, the price of this contract at time $t$ is given by:
    \begin{align}\label{eq:priceswap}
      SW^{t,\Theta,M,R} = \frac{P^{t,\Theta,1}}{P^{0,\Theta_0,1}} - P^{t,\Theta,M} - R \sum_{i=1}^M P^{t,\Theta,i},
    \end{align}
    and its derivatives with respect to $\Theta$ are given by:
    \begin{align}\label{eq:derivswap}
      \frac{\partial SW}{\partial \theta_j}^{t,\Theta,M,R} = -\frac{P^{t,\Theta,1}}{P^{0,\Theta_0,1}} \int_t^1 h^{t,j}(s) \ud s + P^{t,\Theta,M}\int_t^M h^{t,j}(r) \ud r + R \sum_{i=1}^M \left(P^{t,\Theta,i} \int_t^{t+i} h^{t,j}(s) \ud s\right).
    \end{align}
  \end{enumerate}
\end{prop}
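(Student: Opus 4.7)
My plan is to handle the two items in turn. For item (i), I start from the risk-neutral pricing identity $P^{t,\Theta,u} = \E^{\Q}[\exp(-\int_t^u r_s^{t,\Theta}\ud s) \mid \mathcal{F}_t]$ and exploit the fact that, under the Hull \& White dynamics \eqref{HW}, $r^{t,\Theta}$ is a Gaussian process with explicit mean and covariance. Solving \eqref{HW} as a linear SDE yields $\int_t^u r_s^{t,\Theta}\ud s$ as a conditionally Gaussian random variable, and the Laplace transform of a Gaussian produces an exponential-affine expression in $r_t^{t,\Theta}$ and $\mu^{t,\Theta}$. The calibration choice \eqref{calibHW} is engineered precisely so that this expression collapses to $\exp(-\int_t^u f^\Theta(t,s)\ud s)$; rather than redo the algebra, I would invoke the derivation given in the Appendix (and standard references such as \cite{BM07}). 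The derivative identity \eqref{eq:deriv0bond} is then immediate: \eqref{forward-curve} gives $\partial_{\theta_i} f^\Theta(t,s) = h^{t,i}(s)$, and differentiating the exponential via the chain rule yields the announced formula.

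For item (ii), I decompose the swap into its fixed and floating legs. By linearity of pricing, the fixed leg is worth $R \sum_{i=1}^M P^{t,\Theta,i}$. For the floating leg, I invoke the classical ``par at reset date'' argument: a floating-rate note paying the LIBOR coupons $L(i-1,i)$ at times $i=1,\dots,M$ with principal $1$ repaid at $M$ is worth exactly $1$ at each reset date. Taking conditional expectation under $\Q$ at $t \in (0,1)$ shows that the portion of the note running from the next reset (time $1$) onward is worth $P^{t,\Theta,1}$. Since the first coupon $L(0,1) = 1/P^{0,\Theta_0,1} - 1$ was fixed at $t=0$ and is paid at $1$, the total value of the floating note including principal is $(1 + L(0,1))\,P^{t,\Theta,1} = P^{t,\Theta,1}/P^{0,\Theta_0,1}$. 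Subtracting the principal component $P^{t,\Theta,M}$ isolates the floating coupons and delivers \eqref{eq:priceswap}.

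The derivative formula \eqref{eq:derivswap} then follows by applying \eqref{eq:deriv0bond} to each term and noting that $P^{0,\Theta_0,1}$ is constant in $\Theta$ (since $\Theta_0$ is the frozen time-$0$ observation). The step I expect to require the most care is the replication identity valuing the residual floating bond between two reset dates at $P^{t,\Theta,1}$ — this is the only piece of the argument that is not purely algebraic, and it rests on the tower property and the fact that discounted prices are $\Q$-martingales, both standard for this setting.
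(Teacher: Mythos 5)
Your proposal is correct and, where the paper actually proves something, it follows the same route: the bond-price identity in item 1 is exactly the Gaussian Laplace-transform computation carried out in the Appendix proof of \eqref{calibHW} (the calibration is derived by imposing $\E[\exp(-\int_t^u r^{t,\Theta}_s \ud s)]=\exp(-\int_t^u f^\Theta(t,s)\ud s)$), so invoking that derivation plus $\partial_{\theta_i}f^\Theta(t,s)=h^{t,i}(s)$ and the chain rule is precisely the intended argument for \eqref{eq:price0bond}--\eqref{eq:deriv0bond}. For item 2 the paper gives no proof at all, and your fixed-leg/floating-leg decomposition with the ``par at reset'' valuation of the floating note, using that the first coupon $1/P^{0,\Theta_0,1}-1$ was fixed at time $0$, is the standard and correct way to obtain \eqref{eq:priceswap}; your observation that $P^{0,\Theta_0,1}$ does not depend on $\Theta$ then gives the sensitivities. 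One small point worth noting: differentiating term by term with the absolute-maturity convention of \eqref{eq:price0bond} produces $\int_t^{i} h^{t,j}(s)\ud s$ in the last sum, so the upper limit $t+i$ appearing in \eqref{eq:derivswap} looks like a typo in the paper rather than a gap in your argument.
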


\subsubsection{The stock model}\label{sub:stock_model}

Given the observations $\Theta$ of the interest rate factors and the risky asset's price $x \in (0,\infty)$, the evolution of the price under the neutral-risk measure $\Q$ is given by
\begin{equation} \label{BS}
	S^{t,x,\Theta}_s = x + \int_t^s r^{t,\Theta}_u S^{t,x,\Theta}_u\ud u + \int_t^s \sigma S^{t,x,\Theta}_u \ud \tilde{W}_u, s \in [t,T],
\end{equation}
where $\sigma > 0$, $\tilde{W}$ is another $\Q$-Brownian motion, whose quadratic covariation with $B$ is given by
\begin{equation*}
	\langle B,\tilde W \rangle_t := \rho \, t, t \in [0,T],
\end{equation*}
where $\rho \in [-1,1]$. Equivalently, $S^{t,x,\Theta}$ can be written as:
\begin{align}\label{BS}
	S^{t,x,\Theta}_s = x + \int_t^s r^{t,\Theta}_u S^{t,x,\Theta}_u \ud u + \int_t^s \rho \sigma S^{t,x,\Theta}_u \ud B_u + \int_t^s \sqrt{1 - \rho^2} \sigma S^{t,x,\Theta}_u \ud W_u, s \in [t,T]
\end{align}
where $W$ is a $\Q$-Brownian motion, independent of $B$. 

\begin{Remark}
  In practice, the parameter $\sigma$ appearing in \eqref{BS} can be calibrated so that the model reproduces the prices of some derivatives over the risky asset.
  This can be taken into account by increasing the dimension of the space where $\Theta$ lives, and by adding this calibration procedure.
\end{Remark}
  
\begin{Remark}
Naturally, the general sparse grid approach can be applied to different models and functional representations. 
We made the choice of using a Black \& Scholes model for the stock value and a Hull \& White model with the given functional representation in terms of $h^1,h^2,h^3$ for the short rate, since they are convenient to obtain explicit pricing and sensitivities formulae as we show in the following. 
\end{Remark}

\subsection{Modeling the Balance Sheet}

The key point for us is to approximate the distribution of the balance sheet of an insurance company at time $t=1$ (here a year) given the market observations at $t=0$. As mentionned in the introduction, the PnL is a process $P$ which can be decomposed as
\begin{align}
	P_t = L_t - A_t,\; t \in [0,1],
\end{align}
where $L$ is the value of the liabilities of the company and $A$ is the value of the assets.\\
\textcolor{black}{We assume that at time $t=0$, the balance sheet is clear, meaning that the company has no asset nor liability, that is $L_0 = A_0 = 0$.}


We describe precisely in the two subsequent sections how these quantities are defined. 
Importantly, we denote by $\bar{\mathcal{X}}_t := (\bar{S}_t, \bar{\Theta}_t)$, $0 \le t \le 1$, the stochastic process representing the random evolution of the market parameter under the \emph{real-world} measure $\P$. Namely, $\bar{S}$ is the stock price and $\bar{\Theta}$ the interest rate curve parameters as described above in Section \ref{sub:short_rate}. It is important to have in mind that the model chosen for the stock price $S$ (under $\Q$) and $\bar{S}$ (under $\P$) will be completely different as they do not serve the same purpose (pricing-hedging on one hand, risk management of the Balance Sheet \textcolor{black}{or regulatory assessment of required capital} on the other hand). 

\subsubsection{Liability side}

For any market observation $\bar{\mathcal{X}}_t := (\bar{S}_t, \bar{\Theta}_t)$, the value $L_t =: \ell(t,\bar{S}_t, \bar{\Theta}_t)$ of the liabilities has to be computed, especially at time $t=1$ in our application. The company's liabilities are reduced to one derivative product sold at $t=0$. In our setting, the contingent claim's price is simply given by:
\begin{equation} \label{eq:price}
	\ell(t,x,\Theta) = \espQ{e^{-\int_t^T r^{t,\Theta}_s \ud s} G(S^{t,x,\Theta})},
\end{equation}
where $(r^{t,\Theta},S^{t,x,\Theta})_{t \le s \le T}$ are the risk neutral dynamics of the short rate and stock price, see Section \ref{sub:short_rate} and Section \ref{sub:stock_model}, calibrated from the observed market parameter $(x,\Theta)$ at time $t$.

We recall that the payoff $G$ depends on $S^{t,x,\Theta}$ only through the values $S^{t,x,\Theta}_\tau$, $\tau \in \Gamma_G := \left\{\tau_0,\dots,\tau_\kappa\right\}$ ($\kappa \ge 0$), see \eqref{eq de payoff}. 

As explained in more detail below, the computation of $P_1$ first requires to approximate $\ell(t,x,\Theta)$ for $(t,x,\Theta)$ on a (possibly stochastic) discrete grid of $[0,1] \times (0,\infty) \times \R^3$. This approximation $L$ at any point $(t,x,\Theta) \in [0,1]\times(0,\infty)\times\R^3$ basically follows from the simulation of the processes $r^{t,\Theta}$ and $S^{t,x,\Theta}$ under the risk-neutral measure $\Q$ and a Monte Carlo procedure. We will see in Section \ref{se:num_meth} that the simulation can be done in an exact manner in our model.

\begin{Remark}
  A classical approach to compute $\ell$ would be to use a dynamic programming principle. Step by step, it requires
  \begin{enumerate}
  \item To numerically obtain $\ell(1,x,\Theta)$ for all $(x, \Theta)$ on the grid with \eqref{eq:price},
  \item Then to iteratively compute $\ell(t_{k+1},x,\Theta)$ for all $x,\Theta$ on the grid using
    \begin{equation} \label{dpeq}
      \ell(t_k,x,\Theta) = \espQ{e^{-\int_{t_k}^{t_{k+1}} r^{t_k,\Theta}_s \ud s} \espQ{e^{-\int_{t_{k+1}}^T r^{t_k,\Theta}_s \ud s} G(S^{t_k,x,\Theta}) | \cF_{t_{k+1}}}}.
    \end{equation}
  \end{enumerate}

  However, the inner conditional expectation is \emph{not} of the form $\ell(t_{k+1},x,\Theta)$ for $x > 0$ and $\Theta \in \R^3$. In fact, the time of the market observation $\Theta$ is still $t_k$, while the discount factor goes only from $t_{k+1}$ to $T$, in contrast with \eqref{eq:price}.
  Therefore, using the dynamic programming equation \eqref{dpeq} would require to introduce some additional and artificial parameters, namely the time of calibration and the value of $r^{t_k,\Theta}$ at time $t_{k+1}$. This would made the overall procedure heavier that is why we compute $\ell$ using simply \eqref{eq:price}.
\end{Remark}

\subsubsection{Asset side}

{The company wants to replicate the product with payoff $G$\footnote{In practice, the pricing embeds a margin and the objective of replicating the payoff G is to secure it.}.  The classical theory of mathematical finance ensures that it is equivalent, in theory, to possess a portfolio which negates the variations of the price of the product with respect to the evolution of the underlying parameters.
  \\
  In our context, the insurer wants to be protected against the variations with respect to the stock price $S_t$ and the interest rate curve, which is modeled through the parameter $\Theta$.
  \\
  The dynamic hedging portfolio is constructed and rebalanced in discrete time, on the time grid $\Gamma := \{t_0 = 0 < t_1 < \dots < t_n = 1\}$ (in practice, the porfolio will be rebalanced up to the maturity of the product, but in our setting, we are only interested in the portfolio's value up to $t=1$). At each time $t \in \Gamma$, the insurer computes the derivatives of the price with respect to $S_t$ and $\Theta_t$, and then buys some financial assets (the stock and swaps) in order to construct a portfolio whose derivatives match those computed.
  \\
  To model this framework, we decompose the hedging portfolio's value $A$ in two parts:
  \begin{align}A_t = A^\Delta_t + A^\rho_t.\end{align}
  The process $A^\Delta$ is the value of the portfolio obtained to cancel the variations of the price with respect to $S$, while $A^\rho$ is defined to deal with the variations with respect to $\Theta$.
  \begin{Remark}Obviously, since in practice the hedging is done in discrete time and some underlying parameter are not considered, the payoff $G$ is not exactly replicated, nor super-replicated. Therefore the PnL of the company is not null, nor always positive, and the goal of this proceedings is precisely to propose a new numerical method to estimate the distribution of this quantity at time $t=1$.\end{Remark}
  To construct the hedging portfolio, the insurer can buy the underlying stock, together with three swap contracts, defined by some rates $R_1, R_2, R_3 > 0$ and maturity dates $T_1, T_2, T_3 \in [1,T]$. Interest rate hedging is performed so that the portfolio is insensitive to the variations of the main maturities of the interest rate curve. For long-term products, this means building an hedging portfolio containing several different maturities from 1 year to 30 year. Here, only 3 maturities representing short, medium and long-term part of the curve are considered for simplicity. The formula for their price $\mathrm{SW}^{t,\Theta,T_i,R_i}, i = 1, 2, 3$ is given in Proposition \ref{prop:pricesIR} above.
  We now describe how to compute the quantities of assets and swaps to buy at a time $t \in \Gamma$, to rebalance the hedging portfolio.
  Denote by $\Delta$ (resp. $\rho_i, i= 1, 2, 3$) the quantities of stock (resp. swap with rate $R_i$ and maturity date $T_i$, $i = 1, 2, 3$). Then the value of the portfolio of the company is given by:
  \begin{align}
    \Pi_t = \Delta S_t + \sum_{i=1}^3 \rho_i \ud SW^{t, \Theta_t, T_i, R_i} - \ell(t,S_t,\Theta_t).
  \end{align}
  By Itô's formula, assuming a semimartingale decomposition for the process $\Theta$ under $\P$, we get:
  \begin{align*}
    \ud \Pi_t &= \left(\Delta - \Delta(t,S_t,\Theta_t)\right) \ud S_t
    \\ &+ \left(\rho_1 \frac{\partial \mathrm{SW}}{\partial \theta_1}^{\!t,\Theta,T_1,R_1} + \rho_2 \frac{\partial \mathrm{SW}}{\partial \theta_1}^{t,\Theta,T_2,R_2} +\rho_3 \frac{\partial \mathrm{SW}}{\partial \theta_1}^{t,\Theta,T_3,R_3} - \frac{\partial \ell}{\partial \theta_1}(t,x,\Theta)\right) \ud \theta_1
    \\ &+ \left( \rho_1 \frac{\partial \mathrm{SW}}{\partial \theta_2}^{\!t,\Theta,T_1,R_1} + \rho_2 \frac{\partial \mathrm{SW}}{\partial \theta_2}^{t,\Theta,T_2,R_2} +\rho_3 \frac{\partial \mathrm{SW}}{\partial \theta_2}^{t,\Theta,T_3,R_3} - \frac{\partial \ell}{\partial \theta_2}(t,x,\Theta)\right) \ud \theta_2
    \\ &+ \left( \rho_1 \frac{\partial \mathrm{SW}}{\partial \theta_3}^{\!t,\Theta,T_1,R_1} + \rho_2 \frac{\partial \mathrm{SW}}{\partial \theta_3}^{t,\Theta,T_2,R_2} +\rho_3 \frac{\partial \mathrm{SW}}{\partial \theta_3}^{t,\Theta,T_3,R_3} - \frac{\partial \ell}{\partial \theta_3}(t,x,\Theta) \right) \ud \theta_3
    \\ &+ \ud t \mbox{ terms. }
  \end{align*}
  To cancel the risks induced by the variations of the stock price and the interest rate curve, it is needed that the four first terms in the previous equation cancel.\\
  Those considerations lead to the following construction for the hedging portfolio $A = A^\Delta + A^\rho$:
}


{\paragraph{$\Delta$-hedging:} 
The value of $A^\Delta$ at time $1$ is 
\begin{align}\label{eq:delta-hedging}
A^\Delta_1 = \sum_{i=0}^{n-1} \Delta({t_i},\bar{S}_{t_i}, \bar{\Theta}_{t_i}) \left(\bar{S}_{t_{i+1}} - \bar{S}_{t_i}\right)
\quad\text{ where }\quad
\Delta(t,x,\Theta) := \frac{\partial L}{\partial x}(t,x,\Theta).
\end{align}
Note that the function $\Delta$ has to be computed, at each time $t_i, i = 0, \dots, n$, and for any market situation $(x,\Theta)$ at this time. A method leading to a numerical estimation of $\Delta$ is proposed in Section \ref{se:num_meth}.}

{\paragraph{$\rho$-hedging:} 
The value of $A^\rho_1$ is
\begin{equation}\label{eq:rho-hedging}
A^\rho_1 = \sum_{i=0}^{n-1} \sum_{j=1}^3 \rho^j(t_i,\bar{S}_{t_i}, \bar{\Theta}_{t_i}) \left( \mathrm{SW}^{t_{i+1},\bar\Theta_{t_{i+1}},T_j,R_j} - \mathrm{SW}^{t_i,\bar\Theta_{t_i},T_j,R_j} \right)\;.
\end{equation}
 At time $t \in [0,T]$, for a market at $(x,\Theta)$, the quantities $\rho^i(t,x,\Theta)$, $i=1,2,3$ of each swap contract required for the hedging are given by the solution of the linear system
\begin{equation}\label{eq:rho}
\begin{cases}
\rho_1 \frac{\partial \mathrm{SW}}{\partial \theta_1}^{\!t,\Theta,T_1,R_1} + \rho_2 \frac{\partial \mathrm{SW}}{\partial \theta_1}^{t,\Theta,T_2,R_2} +\rho_3 \frac{\partial \mathrm{SW}}{\partial \theta_1}^{t,\Theta,T_3,R_3} & =  \frac{\partial \ell}{\partial \theta_1}(t,x,\Theta) \\
\rho_1 \frac{\partial \mathrm{SW}}{\partial \theta_2}^{\!t,\Theta,T_1,R_1} + \rho_2 \frac{\partial \mathrm{SW}}{\partial \theta_2}^{t,\Theta,T_2,R_2} +\rho_3 \frac{\partial \mathrm{SW}}{\partial \theta_2}^{t,\Theta,T_3,R_3} & =  \frac{\partial \ell}{\partial \theta_2}(t,x,\Theta) \\
\rho_1 \frac{\partial \mathrm{SW}}{\partial \theta_3}^{\!t,\Theta,T_1,R_1} + \rho_2 \frac{\partial \mathrm{SW}}{\partial \theta_3}^{t,\Theta,T_2,R_2} +\rho_3 \frac{\partial \mathrm{SW}}{\partial \theta_3}^{t,\Theta,T_3,R_3} & =  \frac{\partial \ell}{\partial \theta_3}(t,x,\Theta)
\end{cases}
\end{equation}
One key quantity to compute for us in this setting is thus the vector of sensitivities $(\frac{\partial \ell}{\partial \theta_i}(t,x,\Theta))$, $i=1,2,3$. 
\begin{Remark} (i) We choose to always use the same swap contracts issued at $t=0$ as hedging instruments. We could have decided to enter for free in swaps (at the swap rate) at each rebalancing time. However, this strategy requires to keep the memory of the swap rate in order to compute the swap price at the next rebalancing date.
\\
(ii) The $T_i,R_i$ should be chosen so that they represent some liquid contracts.
\end{Remark}
}


\subsubsection{The global PnL function}
From the previous two sections, we conclude that the PnL of the balance sheet at time $1$ can be expressed as,
$$ P_1 = p_1\left((t,\bar{S}_t,\bar{\Theta}_t)_{t\in \Gamma}\right)$$
where $(\bar{S},\bar{\Theta})$ are the market parameters (risk factors) and the PnL function $p_1:\R^\gamma \rightarrow \R$, with $\gamma = 4 \times (n+1)$, is given by
\begin{align}\label{eq PnL function}
\ell(t_n,x_n,\Theta_n) - \sum_{i=0}^{n-1}\Delta(t_i,x_i,\Theta_i)(x_{i+1}-x_{i}) - \sum_{i=0}^{n-1} \sum_{j=1}^3 \rho^j(t_i,x_i, \Theta_i) \left( 
\mathrm{SW}^{t_{i+1}, \Theta_{i+1}, T_j,R_j} - \mathrm{SW}^ {t_{i}, \Theta_{i}, T_j,R_j}  \right)\;.
\end{align}

In the next section, we describe the model we will consider for the market parameter $(\bar{S},\bar{\Theta})$.

\subsection{Market parameters under the \emph{real-world} measure} \label{underP}

We describe here the model that will be used for the simulation of the market parameters in the numerical part. Let us insist that this \emph{real-world} measure $\P$ {might represent} the view of the management on the evolution of the market parameter on the period $[0,1]$. As already mentioned, it can be completely different from the model used for the \emph{risk-neutral} pricing.

We assume that we know, or at least are able to estimate, the first two moments of the distribution of $(X_1 := \log(S_1), \Theta_1) = (X_1, (\theta_1)_1, (\theta_2)_1, (\theta_3)_1) $ under $\P$. More precisely, we assume that under $\P$, this random vector has mean and covariance matrix given by
\begin{align}
  \mu &= \left(\mu_X, \mu_1, \mu_2, \mu_3\right), \\
  V &= (V_{ij})_{i,j=0,1,2,3}.
\end{align}

To model the random process $(X_t, (\theta_1)_t, (\theta_2)_t, (\theta_3)_t)_{t \in [0,1]}$ under $\P$, we assume that its dynamics are given by
\begin{align}
  X_t &= X_0 + b_0 t + c_{00} W^0_t + c_{01} W^1_t + c_{02} W^2_t + c_{03} W^3_t, \\
  (\theta_1)_t &= (\theta_1)_0 + b_1 t + c_{11} W^1_t + c_{12} W^2_t + c_{13} W^3_t \\
  (\theta_2)_t &= (\theta_2)_0 + b_2 t + c_{22} W^2_t + c_{23} W^3_t \\
  (\theta_3)_t &= (\theta_3)_0 + b_3 t + c_{33}W^3_t,
\end{align}
where $W^i, i = 0,1,2,3$ are independent $\P$-Brownian motions.

\begin{prop}\label{propP}There exists at most one set of coefficients $b_i, c_{ij}, i,j = 0,1,2,3$, such that the random vector $(X_1, (\theta_1)_1, (\theta_2)_2, (\theta_3)_2)$ has mean $\mu$ and covariance matrix $V$.\end{prop}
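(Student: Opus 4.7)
The plan is to decouple the drifts $b_i$ from the diffusion coefficients $c_{ij}$, and then to recognise the latter as a Cholesky-type factorisation which is made unique by the upper-triangular structure imposed by the model.

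First, taking expectations at $t=1$ in the four displayed dynamics yields $\E[X_1]=X_0+b_0$ and $\E[(\theta_i)_1]=(\theta_i)_0+b_i$ for $i=1,2,3$, so the prescribed mean vector $\mu$ determines $b_0,b_1,b_2,b_3$ uniquely.

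Next, I organise the diffusion coefficients into the $4\times 4$ upper-triangular matrix
\[
C=\begin{pmatrix} c_{00} & c_{01} & c_{02} & c_{03}\\ 0 & c_{11} & c_{12} & c_{13}\\ 0 & 0 & c_{22} & c_{23}\\ 0 & 0 & 0 & c_{33}\end{pmatrix},
\]
the zero pattern reflecting that $(\theta_3)_t$ depends only on $W^3$, $(\theta_2)_t$ only on $W^2,W^3$, and so on. Writing $\vec{W}_1:=(W^0_1,W^1_1,W^2_1,W^3_1)^\top$, the centred vector $(X_1-X_0-b_0,\,(\theta_1)_1-(\theta_1)_0-b_1,\,(\theta_2)_1-(\theta_2)_0-b_2,\,(\theta_3)_1-(\theta_3)_0-b_3)^\top$ equals $C\vec{W}_1$, which is centred Gaussian with covariance $CC^\top$. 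The covariance constraint thus reduces to the matrix equation $CC^\top=V$ with $C$ upper triangular.

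The main obstacle is to show that this matrix equation has at most one solution; concretely, after fixing the sign convention $c_{ii}\ge 0$ on the diagonal (without which the $2^4$-fold ambiguity from flipping the sign of individual rows of $C$ would survive), I proceed sequentially from the bottom-right corner. The $(3,3)$ entry of $CC^\top$ gives $c_{33}^2=V_{33}$, hence $c_{33}=\sqrt{V_{33}}$; the entries $V_{i,3}$ for $i<3$ then yield $c_{i,3}=V_{i,3}/c_{33}$; the $(2,2)$ entry gives $c_{22}^2=V_{22}-c_{23}^2$, while the remaining $c_{i,2}$ follow from $V_{i,2}-c_{i,3}c_{2,3}=c_{i,2}c_{2,2}$; one continues analogously to recover $c_{11},c_{01},c_{12}$ and finally $c_{00}$. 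This is nothing but the Cholesky algorithm applied from the last pivot upward, and its uniqueness under the non-negative diagonal convention is classical. The only point requiring care is that each pivot equation $c_{ii}^2=V_{ii}-\sum_{j>i}c_{ij}^2$ has a non-negative right-hand side (automatic from $V$ being a covariance matrix) and a strictly positive one whenever a division by $c_{ii}$ is required (ensured by $V$ being positive definite), which completes the uniqueness argument.
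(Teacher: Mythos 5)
Your proof is correct and follows essentially the same route as the paper's own argument: the drifts are pinned down by the means, and the paper's ``ascending step / back step'' recursion is precisely your bottom-up Cholesky factorisation $V=CC^\top$ with $C$ upper triangular, the positive square root being taken implicitly at each pivot. One small correction: the residual sign ambiguity you mention comes from flipping \emph{columns} of $C$ (equivalently replacing $W^j$ by $-W^j$), not rows, since $(CD)(CD)^\top=CC^\top$ for any diagonal $D$ with entries $\pm 1$; your convention $c_{ii}\ge 0$, together with positive definiteness of $V$ to justify the divisions, is exactly what the paper's proof uses without stating it.
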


\begin{proof}We refer to the appendix for a proof, cf. Proposition \ref{proofP}.\end{proof}

\begin{Remark}
 It is well-known that it is difficult to estimate accurately the drift parameter in a Black-Scholes model. This makes our computation subject to model risk. We leave it to a future research work to find a robust way to approximate the law of $P_1$ under $\P$. Nevertheless, let us point out that the grid approach allows us to compute, with minimal re-computation, risk measures for different approximations of the law of $P_1$. This is one of the advantages of this method with respect to using ``nested simulations'', as illustrated in Section \ref{se:num}.
%
\end{Remark}

\section{Numerical methods}\label{se:num_meth}

In this section, we describe the two numerical methods that we use to compute the risk indicator on the balance sheet PnL. The first one is known as \emph{nested simulation} approach and is already used in the industry, see the seminal paper \cite{gordy2010nested}. The second one is a \emph{sparse grid} approach and is designed to be more efficient that the \emph{nested simulation} approach in the case of moderate dimensions. In the next section, we present  numerical simulations that confirms this fact for the model with  moderate dimension that we consider here.
\subsection{Estimating the risk measure} \label{su:est}

Given a risk measure $\varrho$ and the loss distribution $\eta$ of the balance sheet at one year, we estimate the quantity of interest $\varrho(\eta)$ by simulating a sample of $N$ i.i.d random variables $(\Psi_j)_{1 \le j \le N}$ with law $\eta$ and then computing simply $\varrho(\eta^N)$ using the formulae \eqref{eq de V@R}, \eqref{eq de spectral risk meas} and \eqref{eq de AV@R} with $\eta^N$ instead of $\eta$. Here, $\eta^N$ stands for the empirical measure associated to the $\Psi_j$ i.e.
\begin{align*}
\eta^N = \frac1N \sum_{j=1}^N \delta_{\Psi_j} \;,
\end{align*}
where $\delta_x$ is the Dirac mass at the point $x$.



\vspace{2mm}



Recall, that in our work, the loss distribution $\eta$ is obtained through the following expression:
\begin{align*}
\eta = p_1\sharp \nu 
\end{align*} 
$p_1$ being described in \eqref{eq PnL function} and $\nu$ stands for the distribution of the market parameters. Namely, $\nu$ is the law of the random variable
\begin{align}
\bar{\mathcal{X}} := (\bar{S}_t,\bar{\Theta}_t)_{t \in \Gamma}
\end{align}
under the \emph{real world} probability measure $\P$.


In order to estimate $\varrho(\eta)$ for a chosen risk measure, we need to be able to sample from $\eta$ which implies two steps in our setting. First, we need to be able to sample  $\bar{\mathcal{X}}$ and then we use an approximation $p_1^\upsilon$ of $p_1$, $\upsilon \in 
\{\mathcal{N},\mathcal{S} \}$, namely: 
\begin{itemize}
\item $p^\mathcal{N}_1$ if one chooses the \emph{nested simulation} approach;
\item $p^\mathcal{S}_1$ if one chooses the \emph{sparse grid} approach.
\end{itemize}
Eventually, the  estimator of $\varrho(\mu)$ is given by
\begin{align} \label{eq risk estimation}
R^\upsilon := \varrho(p_1^\upsilon\sharp \nu^N)\;,\quad \text{for}\, \upsilon \in \{ \mathcal{N}, \mathcal{S} \} \,.
\end{align}

To summarise, the two numerical methods have the following steps.

\paragraph{Nested simulation approach} 
\begin{enumerate}
\item Outer step: Simulate the model parameters $(\mathcal{X}_j)_{j=1, \ldots, N}$.
\item Inner step: Simulate $\Psi_j =p^\mathcal{N}(\mathcal{X}_j)$ using MC simulations. This requires to compute 
  the option prices with Monte Carlo estimates, 
  the interest rate derivative prices, 
and the various sensitivities of the price, see subsection \ref{greeks}. All these computations are done using closed-form formulae that are derived below.

\item Estimate the risk measure.
\end{enumerate}
All computations are made ``online''.

\paragraph{Sparse grid approach} 
\begin{enumerate}
\item Fix a (sparse) grid $\mathcal{V}$ and compute the approximation $p^\mathcal{S}$ at each required value
on the grid by an MC simulation. This involves exactly the same computations as 2. above. 
\item Simulate the $N$ model parameter samples $(\mathcal{X}_j)$
and evaluate  $\Psi_j =p^\mathcal{S}(\mathcal{X}_j)$.
\item Estimate the risk measure.
\end{enumerate}
Computations at step 1. are done  ``offline''. The next two steps are done ``online''.

%

We now describe precisely how to compute $p^\mathcal{N}$ and $p^\mathcal{S}$.

\subsection{Nested Simulation approach} \label{nestedapproach}

In the \emph{nested simulation} approach, once the market parameters $\bar{\mathcal{X}}$ have been simulated, the function $p^\mathcal{N}_1$ has itself to be computed. Recalling \eqref{eq PnL function}, this requires to evaluate the function $\ell^\mathcal{N}, \Delta^\mathcal{N}, \rho^\mathcal{N}$ (approximations of $\ell, \Delta, \rho$) at the value of the market parameters. This computation is made using again a Monte Carlo approach.


In order to compute the risk-neutral expectations in the above formulae, one has to sample the short rate process and compute its integral over $[t,T]$, and also to simulate the stock price $S$ at least at the times $\tau_\ell \in \Gamma_G \cap [t,T]$. Under the model described in section \ref{se:modelQ}, the simulation is exact and is described in the two following statements whose proofs are postponed to the appendix.

Let $(t,x,\Theta) \in [0,T]\times(0,\infty)\times\R^3$ be a market observation, and consider the processes $\left(r^{t,\Theta}_s\right)_{s \in [t,T]}$ and $\left(S^{t,x,\Theta}_s\right)_{s \in [t,T]}$ defined by \eqref{HW}, \eqref{BS}.
We first start with an easy and well known observation.
\begin{lemma}  \label{lem}
  We have the following decomposition for the short rate process:
  \begin{equation}
    r^{t,\Theta} = \xi^t + \alpha^{t,\Theta},
  \end{equation}
  where $(\alpha^{t,\Theta}_s)_{s \in [t,T]}$ is the deterministic function of time
  \begin{equation} \label{eqalpha}
    \alpha^{t,\Theta}_s = \frac{1}{a} f^{\Theta}(t,s) + \frac{b^2}{2 a^3} \left[ 1- e^{-a(s-t)} \right]^2, s \in [t,T],
  \end{equation}
  and $\left(\xi^t_s\right)_{s \in [t,T]}$ is the solution of the SDE
  \begin{align} \label{eqxi}
    d \xi^t_s &= -a \xi^t_s \ud s + b \ud B_s, s \in [t,T]\\
    \xi^t_t &= 0.
  \end{align}
\end{lemma}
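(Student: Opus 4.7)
The natural strategy is to guess the decomposition and verify it by uniqueness of the solution to the Hull \& White SDE. I would define the candidate $\tilde r_s := \alpha^{t,\Theta}_s + \xi^t_s$ on $[t,T]$, where $\alpha^{t,\Theta}$ is the deterministic function given by \eqref{eqalpha} and $\xi^t$ is the unique strong solution of the linear SDE \eqref{eqxi}, which exists because the drift and diffusion coefficients are Lipschitz. I would then show that $\tilde r$ satisfies the Hull \& White dynamics \eqref{HW}, with the same initial condition as $r^{t,\Theta}$ at time $t$; uniqueness of the strong solution to \eqref{HW} (again linear, so Lipschitz) then forces $r^{t,\Theta} = \tilde r$, which is exactly the claimed identity.

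The first sub-step is to check the initial condition: since $\xi^t_t = 0$, we must have $\alpha^{t,\Theta}_t = r^{t,\Theta}_t$. The exponential $[1-e^{-a(s-t)}]^2$ vanishes at $s=t$, so this amounts to matching the constant term of $\alpha^{t,\Theta}$ with the short-rate spot, which is fixed by the calibration convention that reads the instantaneous spot rate off the forward curve $f^\Theta(t,\cdot)$.

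The second and main sub-step is to compute $\ud \tilde r_s$. Since $\alpha^{t,\Theta}$ is $C^1$ in $s$ and $\xi^t$ has the dynamics in \eqref{eqxi}, we get
\begin{equation*}
\ud \tilde r_s = \partial_s \alpha^{t,\Theta}_s\,\ud s - a\,\xi^t_s\,\ud s + b\,\ud B_s = \bigl(\partial_s \alpha^{t,\Theta}_s + a\,\alpha^{t,\Theta}_s - a\,\tilde r_s\bigr)\,\ud s + b\,\ud B_s .
\end{equation*}
Comparing with \eqref{HW}, it suffices to verify the pointwise ODE
\begin{equation*}
\partial_s \alpha^{t,\Theta}_s + a\,\alpha^{t,\Theta}_s = a\,\mu^{t,\Theta}_s, \qquad s \in [t,T],
\end{equation*}
which is a direct algebraic computation: one differentiates \eqref{eqalpha} in $s$, plugs the explicit expression of $\mu^{t,\Theta}$ from \eqref{calibHW}, and checks that the terms involving $f^\Theta$, $\partial_s f^\Theta$ and the exponentials $e^{-a(s-t)}$, $e^{-2a(s-t)}$ all cancel. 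This is the only place where the specific form of the Hull \& White calibration is used.

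The main obstacle is purely bookkeeping in this last calculation: making sure the coefficients in front of $e^{-a(s-t)}$, $e^{-2a(s-t)}$ and the constant terms agree once we expand $[1-e^{-a(s-t)}]^2$ and recombine with the $(1-e^{-2a(s-t)})$ coming from $\mu^{t,\Theta}$. Nothing in the rest of the argument is delicate: the existence/uniqueness of $\xi^t$ and of $r^{t,\Theta}$ are classical, and we invoke uniqueness on the same filtration driven by the single Brownian motion $B$, so no compatibility issue arises.
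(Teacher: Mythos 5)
Your plan is correct and would yield a complete proof, but it runs in a slightly different direction than the paper's argument. You take the candidate $\tilde r:=\alpha^{t,\Theta}+\xi^t$, check that it satisfies \eqref{HW} by verifying the deterministic ODE $\partial_s\alpha^{t,\Theta}_s+a\,\alpha^{t,\Theta}_s=a\,\mu^{t,\Theta}_s$ together with the spot condition $\alpha^{t,\Theta}_t=r^{t,\Theta}_t$, and conclude by strong uniqueness for the (linear, Lipschitz) SDE. The paper instead solves both linear equations explicitly with the integrating factor $e^{as}$, writes $r^{t,\Theta}_s=e^{-a(s-t)}r^{t,\Theta}_t+a\int_t^s e^{-a(s-u)}\mu^{t,\Theta}_u\,\ud u+\xi^t_s$, and then checks by direct integration of \eqref{calibHW} that $a\int_t^s e^{-a(s-u)}\mu^{t,\Theta}_u\,\ud u=\alpha^{t,\Theta}_s-\alpha^{t,\Theta}_t e^{-a(s-t)}$, finishing with the same observation $r^{t,\Theta}_t=\alpha^{t,\Theta}_t$ (i.e.\ $f^\Theta(t,t)=r^{t,\Theta}_t$, from \eqref{eq:fm}). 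The two computations are equivalent --- your ODE is the $s$-derivative of the paper's integral identity --- so the calibration bookkeeping is identical; what differs is that your uniqueness route avoids writing stochastic-integral representations, whereas the paper's variation-of-constants route has the side benefit of producing the explicit formula $\xi^t_s=b\int_t^s e^{-a(s-u)}\,\ud B_u$ that is reused immediately in the proof of Proposition \ref{propSimQ}.

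One caveat when you actually carry out the cancellation: it goes through for $\alpha^{t,\Theta}_s=f^\Theta(t,s)+\frac{b^2}{2a^2}\left[1-e^{-a(s-t)}\right]^2$, which is the expression consistent with the paper's own proof and with \eqref{eq:esp}; the displayed formula \eqref{eqalpha} carries a spurious overall factor $1/a$, and with it both your ODE and your spot condition $\alpha^{t,\Theta}_t=f^\Theta(t,t)=r^{t,\Theta}_t$ would fail. So your verification, done literally, detects this misprint rather than indicating a flaw in your approach. (A very minor technical point: $f^\Theta(t,\cdot)$ is only piecewise smooth in $s$, so the ODE should be understood at the finitely many kinks in the almost-everywhere/absolutely-continuous sense; this affects the paper's computation equally and is harmless.)
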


The following proposition provides a recursive way to produce sample paths of the triplet $(\xi^t, A^t, X^{t,x,\Theta})$ on the discrete grid $\{t\} \cup \left( \Gamma_G \cap [t,1] \right)$. We are thus in a position to simulate the evolution of $(r^{t,\Theta}, S^{t,x,\Theta})$ and the discount factor $\beta^{t,\theta}_T:= e^{-\int_t^T r^{t,\Theta}_s \ud s}$ under the risk-neutral measure $\Q$.

\begin{prop}\label{propSimQ}
  Fix $t \le s \le u \le T$. Conditionally upon $\cF_s$, the triplet
  \begin{equation*}
    \left(\xi^{t}_u, A^{t,s}_u := \int_s^u \xi^{t}_r \ud r, X^{t,x,\Theta}_u := \log\left(S^{t,x,\Theta}_u\right)\right)
  \end{equation*}
  is a Gaussian vector of dimension $3$, with mean vector and covariance matrix respectively given by
  \begin{align}\begin{split}\label{eq:esp}
      \EFp{s}{\xi^t_u} &= \xi^t_s e^{-a(u-s)}, \\
      \EFp{s}{A^{t,s}_u} &= \frac{\xi^t_s}{a}(1-e^{-a(u-s)}), \\
      \EFp{s}{X^{t,x,\Theta}_u} &= X^{t,x,\Theta}_s + \int_s^u \alpha_r \ud r - \frac{\sigma^2}{2}(u-s)+ \frac{\xi^t_s}{a}(1-e^{-a(u-s)}),
    \end{split}\end{align}
  and
  \begin{align}\begin{split}\label{eq:cov}
      \mathrm{Var}_s(\xi^t_u) &= \frac{b^2}{2a}(1-e^{-2a(u-s)}), \\
      \mathrm{Cov}_s(\xi^t_u, A^{t,s}_u) &= \frac{b^2}{a^2}(1-e^{-a(u-s)})-\frac{b^2}{2a^2}(1-e^{-2a(u-s)}), \\
      \mathrm{Var}_s(A^{t,s}_u) &= \frac{b^2}{a^2}(u-s)-\frac{2b^2}{a^3}(1-e^{-a(u-s)})+\frac{b^2}{2a^3}(1-e^{-2a(u-s)}), \\
      \mathrm{Cov}_s(\xi^t_u,X^{t,x,\Theta}_u) &= \frac{b}{a}(\frac{b}{a}+\rho \sigma)(1-e^{-a(u-s)})-\frac{b^2}{2 a^2}(1-e^{-2a(u-s)}),\\
      \mathrm{Cov}_s(A^{t,s}_u, X^{t,x,\Theta}_u) &= -\frac{1}{a}\mathrm{Cov}(\xi^t_u, X^{t,x,\Theta}_u) + \frac{b}{a}(\frac{b}{a}+\rho \sigma)(u-s)-\frac{b^2}{a^3}(1-e^{-a(u-s)}), \\
      \mathrm{Var}_s(X^{t,x,\Theta}) &= (\rho \sigma +\frac{b}{a})^2(u-s) + (1-\rho^2)\sigma^2(u-s)-2\frac{b}{a^2}(\rho \sigma+\frac{b}{a})(1-e^{-a(u-s)})+\frac{b^2}{2a^3}(1-e^{-2a(u-s)}).
    \end{split}\end{align}
\end{prop}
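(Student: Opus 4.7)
The plan is to write $\xi^t$, $A^{t,s}$ and $X^{t,x,\Theta}$ as explicit (conditionally) linear functionals of the two independent Brownian motions $B$ and $W$, and then to read off the conditional mean vector and covariance matrix from standard Itô-isometry computations. Since the three quantities will be $\mathcal{F}_s$-measurable plus Wiener integrals of deterministic kernels against $B_r - B_s$ and $W_r - W_s$, Gaussianity conditional on $\mathcal{F}_s$ is automatic from the fact that Wiener integrals of deterministic integrands are Gaussian and independent of $\mathcal{F}_s$.

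\textbf{Step 1: explicit formulas.} I would first solve the Ornstein--Uhlenbeck SDE \eqref{eqxi} by variation of constants, obtaining
\begin{equation*}
\xi^{t}_u = \xi^{t}_s e^{-a(u-s)} + b\int_s^u e^{-a(u-r)}\ud B_r .
\end{equation*}
Then Fubini on $A^{t,s}_u = \int_s^u \xi^t_r \ud r$ gives
\begin{equation*}
A^{t,s}_u = \frac{\xi^{t}_s}{a}\bigl(1-e^{-a(u-s)}\bigr) + \frac{b}{a}\int_s^u \bigl(1-e^{-a(u-r)}\bigr)\ud B_r .
\end{equation*}
Applying Itô's formula to $X = \log S^{t,x,\Theta}$ using the second form of \eqref{BS}, together with Lemma \ref{lem} and the identity $\int_s^u r^{t,\Theta}_r \ud r = \int_s^u \alpha^{t,\Theta}_r \ud r + A^{t,s}_u$, yields
\begin{equation*}
X^{t,x,\Theta}_u = X^{t,x,\Theta}_s + \int_s^u \alpha^{t,\Theta}_r\ud r + A^{t,s}_u - \tfrac{\sigma^2}{2}(u-s) + \rho\sigma(B_u - B_s) + \sqrt{1-\rho^2}\,\sigma(W_u - W_s).
\end{equation*}

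\textbf{Step 2: Gaussianity and conditional means.} Each of $\xi^{t}_u$, $A^{t,s}_u$, $X^{t,x,\Theta}_u$ is the sum of an $\mathcal{F}_s$-measurable term and Wiener integrals of deterministic kernels against $B$ and $W$ over $[s,u]$. Conditionally on $\mathcal{F}_s$, the latter are centred Gaussian and independent of $\mathcal{F}_s$, and jointly Gaussian since $B, W$ are independent. The expressions in \eqref{eq:esp} follow immediately by taking conditional expectations: the $\ud B$ and $\ud W$ parts vanish, and substituting $A^{t,s}_u$ into the expression for $X$ gives exactly the third formula (after using $\EFp{s}{A^{t,s}_u} = \tfrac{\xi^{t}_s}{a}(1-e^{-a(u-s)})$).

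\textbf{Step 3: conditional covariances.} This is the only calculation-heavy part. Writing each centred increment in the canonical form $\int_s^u f(r)\ud B_r + \int_s^u g(r)\ud W_r$, Itô's isometry together with independence of $B$ and $W$ gives every entry of the covariance matrix as an elementary integral of products of $1$, $e^{-a(u-r)}$ and $1-e^{-a(u-r)}$ on $[s,u]$. For instance,
\begin{equation*}
\mathrm{Var}_s(\xi^{t}_u) = b^2 \int_s^u e^{-2a(u-r)}\ud r = \frac{b^2}{2a}(1-e^{-2a(u-s)}),
\end{equation*}
and the cross term with $A^{t,s}_u$ is $\frac{b^2}{a}\int_s^u e^{-a(u-r)}(1-e^{-a(u-r)})\ud r$, which expands to the stated expression. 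For $X^{t,x,\Theta}_u$ one writes the $B$-part as $\bigl(\rho\sigma + \tfrac{b}{a}\bigr)(B_u - B_s) - \tfrac{b}{a}\int_s^u e^{-a(u-r)}\ud B_r$, obtained by combining the increment $\rho\sigma(B_u-B_s)$ with the $B$-integral coming from $A^{t,s}_u$; this decomposition is the key trick that produces the mixed $\rho\sigma + b/a$ coefficients appearing in the last three lines of \eqref{eq:cov}. The $W$-part contributes only the $(1-\rho^2)\sigma^2(u-s)$ term in $\mathrm{Var}_s(X^{t,x,\Theta}_u)$ and nothing to the cross-covariances, since the kernels for $\xi^{t}_u$ and $A^{t,s}_u$ involve only $B$.

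The only real obstacle is bookkeeping in Step 3: there is no conceptual difficulty, but one must be careful about the combination of the two Brownian integrals against $B$ when computing the variance of $X^{t,x,\Theta}_u$ and its covariances with the OU-driven quantities. All six covariance entries then reduce to standard primitives of $e^{-a(u-r)}$ and $e^{-2a(u-r)}$ on $[s,u]$.
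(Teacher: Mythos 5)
Your proposal is correct and follows essentially the same route as the paper: both derive the same explicit representation of $(\xi^t_u, A^{t,s}_u, X^{t,x,\Theta}_u)$ as an $\cF_s$-measurable part plus Wiener integrals of deterministic kernels against $B$ and $W$ over $[s,u]$ (the paper assembles this via a three-dimensional linear SDE with closed-form solution, you via variation of constants, Fubini and Itô on $\log S$), and then read off conditional Gaussianity and the moments by Itô isometry. Your explicit formulas and the sample covariance computations you exhibit agree with the paper's.
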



\subsubsection{Approximation of the Liability side}
With the above results, the approximation at any time $t$ of the liability function $\ell$, denoted $\ell^{\mathcal{N}}$,  is straightforward. 
It is given by
\begin{align} \label{eq approx ell N}
	\ell^\mathcal{N}(t,x,\Theta) = \frac1M \sum_{k=1}^M \beta^{t,\Theta,k}_T G(S^{t,x,\Theta,k}),
\end{align}
\textcolor{black}{
with $((\beta^{t,\Theta,k}_\tau,S^{t,x,\Theta,k}_\tau)_{\tau\in \Gamma_G})_{1 \le k \le M}$ are i.i.d realisations of $(\beta^{t,\Theta}_\tau,S^{t,x,\Theta}_\tau)_{\tau\in \Gamma_G}$, recall \eqref{eq de payoff}.
}

\subsubsection{Approximation of the Asset side}
The approximation of the asset side is slightly more involveld as it requires the computation of sensitivities with respect to the model parameters: $\frac{\partial \ell}{\partial x}$ and $\frac{\partial \ell}{\partial \theta_i}$, $i=1,2,3$, see \eqref{eq:delta-hedging}, \eqref{eq:rho-hedging} and \eqref{eq:rho}. We choose to compute the sensitivities using a ``weight'' approach namely we express them as expectation of the discounted payoff times a well chosen random weight. Note that other techniques are available to compute these sensitivities e.g. \emph{Automatic differentiation}. In our context, we have compared the two methods and observed that the weight approach is more efficient, see Section \ref{subse automatic differentiation} in the Appendix. 


\label{greeks}


We now describe how to obtain the sensitivities in a convenient form for Monte Carlo simulation.
Recall that we consider a liability function $\ell$ of the form:
\begin{align}
\ell(t,x,\Theta) = \espQ{e^{-\int_t^T r^{t,\Theta}_s \ud s} G(S^{t,x,\Theta})},
\end{align}
where $G$ depends upon $S^{t,x,\Theta}$ through its values on a finite set $\Gamma_G = \left\{\tau_0 = 0, \dots, \tau_\kappa=T\right\}\subset [0,T]$.
In the following, we assume for simplicity that $\tau_1 > t$. Otherwise, there are deterministic terms that are to be added, but the method remains the same.

\paragraph{The Delta:}We want to compute:
\begin{align}
\frac{\partial \ell}{\partial x} (t,x,\Theta)= \frac{\partial }{\partial x}\espQ{e^{-\int_t^T r^{t,\Theta}_s \ud s} G(S^{t,x,\Theta})}
\end{align}
and we have the following result.

\begin{prop}
  For all $(t,x,\Theta) \in [0,1]\times\R\times\R^3$,  the following holds:
  \begin{align}\label{eq:derivLx}
    \frac{\partial \ell}{\partial x}(t,x,\Theta) &= \espQ{e^{-\int_t^T r^{t,\Theta}_s \ud s} G(S^{t,x,\Theta}) H^{x,\Theta}\left(e^{-\int_t^{\tau_1} \xi^t_s \ud s}, S^{t,x,\Theta}_{\tau_1}\right)},
  \end{align}
  with
  \begin{align}\label{eq:poidsStock}
    H^{x,\Theta}(a,y) = \frac{\Sigma^{-1}_{1,2}a +\Sigma^{-1}_{2,2} \times \left(\log(y/x) - \int_t^{\tau_1} \alpha^{t,\Theta}_r \ud r + \frac{\sigma^2}{2} (\tau_1-t)  \right)}{x},
  \end{align}
  where $\Sigma$ is the covariance matrix of $(A^t_{\tau_1},X^{t,x,\Theta}_{\tau_1})$, see \eqref{eq:sigma}.
\end{prop}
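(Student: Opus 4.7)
The plan is a score function (likelihood ratio) argument, exploiting the fact that $x$ enters the dynamics only as a multiplicative initial condition for $S^{t,x,\Theta}$ and hence shifts the mean of $X^{t,x,\Theta}_{\tau_1}:=\log S^{t,x,\Theta}_{\tau_1}$ by $\log x$, without affecting the covariance of the Gaussian pair $(A^{t,t}_{\tau_1}, X^{t,x,\Theta}_{\tau_1})$.

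First I would use the Markov property of the joint process $(\xi^t, S^{t,x,\Theta})$ at time $\tau_1$, together with the multiplicative structure $S^{t,x,\Theta}_s = x\,e^{Y_s}$ (with $Y_s$ independent of $x$), to reduce the computation to an expectation of a function of $(A^{t,t}_{\tau_1}, X^{t,x,\Theta}_{\tau_1})$ alone. The crucial observation is that once $X^{t,x,\Theta}_{\tau_1}$ is fixed, all subsequent stock values $S^{t,x,\Theta}_{\tau_j}$ for $\tau_j > \tau_1$ are determined up to $x$-independent multiplicative increments $e^{Y_{\tau_j}-Y_{\tau_1}}$, so that conditioning on $(A^{t,t}_{\tau_1}, X^{t,x,\Theta}_{\tau_1})$ yields a function $\Phi(a,y)$ not depending on $x$, giving $\ell(t,x,\Theta) = \espQ{\Phi(A^{t,t}_{\tau_1}, X^{t,x,\Theta}_{\tau_1})}$.

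Next I would invoke Proposition~\ref{propSimQ} with $s=t$ and $\xi^t_t = 0$: the vector $(A^{t,t}_{\tau_1}, X^{t,x,\Theta}_{\tau_1})$ is a two-dimensional Gaussian with mean $(0, \log x + m_X)$, where $m_X := \int_t^{\tau_1}\alpha^{t,\Theta}_r \ud r - \frac{\sigma^2}{2}(\tau_1-t)$, and covariance matrix $\Sigma$ (whose entries are read off from \eqref{eq:cov}). Writing $\ell$ as $\iint \Phi(a,y)\,p(a,y;x)\,\ud a\,\ud y$ and differentiating under the integral, the only contribution comes from the derivative of the Gaussian density with respect to its mean, yielding the usual score
\begin{align*}
\frac{\partial \log p}{\partial x}(a,y;x) \;=\; \frac{1}{x}\left[\Sigma^{-1}_{1,2}\,a + \Sigma^{-1}_{2,2}\bigl(y - \log x - m_X\bigr)\right].
\end{align*}
Substituting back $a = A^{t,t}_{\tau_1}$ and $y = \log S^{t,x,\Theta}_{\tau_1}$ and using $y - \log x = \log(S^{t,x,\Theta}_{\tau_1}/x)$, one recognises the form of $H^{x,\Theta}$ in \eqref{eq:poidsStock}; reintroducing $\Phi$ as the conditional expectation inside then yields the stated identity.

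The main technical obstacle is the first step: rigorously showing that conditioning on $(A^{t,t}_{\tau_1}, X^{t,x,\Theta}_{\tau_1})$ removes all $x$-dependence (in particular, that no $x$-dependence leaks through variables such as $\xi^t_{\tau_1}$), which requires combining the Markov property with the scaling property of the GBM dynamics. A secondary, more routine difficulty is justifying the exchange of differentiation and expectation by dominated convergence, which follows from the polynomial growth of $G$ (bounded for the lookback by $\max_\ell S^{t,x,\Theta}_{\tau_\ell}$), the Gaussian tails of the score, and exponential moments of the discount factor.
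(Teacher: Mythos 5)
Your overall route is the same as the paper's: use Proposition \ref{propSimQ} to see that $x$ enters only as a $\log x$ shift in the mean of $X^{t,x,\Theta}_{\tau_1}$, write $\ell$ as an integral against a Gaussian density, and differentiate that density to produce a score weight. The difference is where you place the reduction: the paper factorises the law over the monitoring dates into transition kernels and differentiates the first one, whereas you first condition at $\tau_1$ to get $\ell(t,x,\Theta)=\espQ{\Phi(A^{t}_{\tau_1},X^{t,x,\Theta}_{\tau_1})}$ with an $x$-free $\Phi$, and then apply the two-dimensional score.

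That first step --- which you yourself single out as the main obstacle --- is a genuine gap, and it cannot be closed in the generality you claim. The Markov state of the pricing model at $\tau_1$ is $(\xi^t_{\tau_1},X^{t,x,\Theta}_{\tau_1})$, not $(A^{t}_{\tau_1},X^{t,x,\Theta}_{\tau_1})$: both the discounting and the stock dynamics after $\tau_1$ depend on $\xi^t_{\tau_1}$. Conditioning on $(A^{t}_{\tau_1},X^{t,x,\Theta}_{\tau_1})$ therefore leaves the conditional law of $\xi^t_{\tau_1}$ inside $\Phi$, and its Gaussian conditional mean contains a term proportional to $X^{t,x,\Theta}_{\tau_1}-\log x-\int_t^{\tau_1}\alpha^{t,\Theta}_r\ud r+\frac{\sigma^2}{2}(\tau_1-t)$; so $\Phi$ does depend on $x$ unless the loading of $X$ in $\E[\xi^t_{\tau_1}\mid A^{t}_{\tau_1},X^{t,x,\Theta}_{\tau_1}]$ vanishes, i.e.\ unless $\mathrm{Cov}(\xi^t_{\tau_1},X^{t,x,\Theta}_{\tau_1}\mid A^{t}_{\tau_1})=0$. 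From \eqref{eq:cov} this partial covariance vanishes iff $\rho=0$ (or trivially when $\kappa=1$, since then nothing after $\tau_1$ matters). A quick sanity check shows the leak is real: for $G\equiv 1$ and $T>\tau_1$ one has $\partial_x\ell=0$, yet the two-dimensional score weight is correlated with $e^{-\int_t^T\xi^t_s\ud s}$ as soon as $\rho\neq 0$, so the identity with the $2\times2$ matrix $\Sigma$ cannot hold there. The consistent fix is to keep $\xi^t_{\tau_1}$ as a third coordinate and differentiate the three-dimensional Gaussian density of $(\xi^t_{\tau_1},A^{t}_{\tau_1},X^{t,x,\Theta}_{\tau_1})$, which turns the weight into the third row of the inverse of the $3\times3$ conditional covariance matrix applied to the centred triplet --- exactly the structure of \eqref{eq:poidsTx} in the next proposition; when $\rho=0$ the $(\xi,X)$ entry of that inverse vanishes and this collapses to the stated $H^{x,\Theta}$, so your argument is complete in that case (the one used in the numerics). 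Note finally that the paper's own proof makes the same elision (it factorises the law of $(A,X)$ into two-dimensional kernels and only reinstates the short rate as a state variable for the $\theta$-sensitivities), so your remaining steps do mirror the paper; but as a standalone proof, your step one needs the three-dimensional state, or the restriction $\rho=0$, to be valid.
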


\begin{proof}
  We write the expectation as an integral, remembering that we know the law of the couple $(\xi^t_u, A^t_u, X^{t,x,\Theta}_u)$ conditionally upon $\cF_s$:

  \begin{align}
    & \espQ{e^{-\int_t^T r^{t,\Theta}_s \ud s} G(S^{t,x,\Theta})} \\    
    & \qquad = e^{-\int_t^T \alpha^{t,\Theta}_s \ud s} \espQ{e^{-\int_t^{\tau_1} \xi^t_s \ud s} \prod_{\ell=1}^{\kappa-1} e^{-\int_{\tau_\ell}^{\tau_{\ell+1}} \xi^t_s \ud s} G(S^{t,x,\Theta}) } \notag \\
& \qquad = e^{-\int_t^T \alpha^{t,\Theta}_s \ud s} \int_{\R^{2\kappa}} e^{-a_1} \prod_{\ell=1}^{\kappa-1} e^{-a_{\ell+1}} G(e^{x_1}, \dots, e^{x_\kappa}) \ud \Q_{(A^t_u,X^{t,x,\Theta}_u)_{u \in \Gamma_G}}(a_1,\cdots,a_\kappa,x_1,\cdots,x_\kappa)
                                                                \nonumber
                                                                \\                                      
& \qquad = e^{-\int_t^T \alpha^{t,\Theta}_s \ud s} \int_{\R^{2\kappa}} e^{-a_1} \prod_{\ell=1}^{\kappa-1} e^{-a_{\ell+1}} G(e^{x_1}, \dots, e^{x_\kappa}) p^\Theta(t,0,\log(x),1,a_1,x_1)\times \dots
                                                                \nonumber
                                                                \\                          
 & \quad\quad\quad\quad\quad\quad\quad\quad\quad\quad\quad\quad\quad \quad\times p^\Theta(t_{\kappa-1},0,x_{\kappa-1},t_\kappa,a_\kappa,x_\kappa) \ud a_1 \cdots \ud a_\kappa \ud x_1 \cdots \ud x_\kappa,\nonumber
  \end{align}
  
  where $p^\Theta(s,a,x,u,.,.)$ is the density of the couple $(A^{t,a}_u,X^{t,x,\Theta}_u)$, conditionally upon $\cF_s$. We have previously seen that it is a Gaussian vector with explicit mean vector and covariance matrix. 
 Thus, using Fubini's theorem, we get, since there is no dependence on $x$ except in the first density:
  
  \begin{align}\label{eq:derivLx1}
    \frac{\partial \ell(t,x,\Theta)}{\partial x} &= e^{-\int_t^T \alpha^{t,\Theta}_s \ud s} \int_{\R^{2\kappa}} e^{-a_1} \prod_{\ell=1}^{\kappa-1} e^{-a_{\ell+1}} G(e^{x_1}, \dots, e^{x_\kappa}) \frac{ \partial p^\Theta(t,0, \log(x),1,a_1,x_1)}{\partial x}\times \dots \nonumber
    \\
    & \quad\quad\quad\quad\quad\quad\quad\quad\quad\quad \quad \times p^\Theta(t_{\kappa-1},0,x_{\kappa-1},t_\kappa,a_\kappa,x_\kappa) \ud a_1 \cdots \ud a_\kappa \ud x_1 \cdots \ud x_\kappa. 
  \end{align}
  
  Consequently, the sensibility of the discounted price with respect to the initial stock price is computed only by calculating the derivative of the density with respect to $x$.
  
  We have 
  \begin{align}
    p^\Theta(t,0,\log(x),s,a,y) = \frac{1}{\mathrm{det}(2\pi\Sigma)^{\frac 1 2}} \exp \left(-\frac 1 2 \left( \left((a, y)-\mu\right) \Sigma^{-1} \left((a, y)-\mu\right)^\top \right)\right),
  \end{align}
  where, thanks to \eqref{eq:esp}-\eqref{eq:cov},
  \begin{align}
    \label{eq:mu} \mu &= \left(  \EFp{t}{A^t_{\tau_1}}, \EFp{t}{X^{t,x,\Theta}_{\tau_1}} \right) \\
    \label{eq:sigma} \Sigma &= 
             \begin{pmatrix}
               \mathrm{Var}(A^t_{\tau_1})_t & \mathrm{Cov}(A^t_{\tau_1}, X^{t,x,\Theta}_{\tau_1})_t \\
              \mathrm{Cov}(A^t_{\tau_1}, X^{t,x,\Theta}_{\tau_1})_t & \mathrm{Var}(X^{t,x,\Theta}_{\tau_1})_t
             \end{pmatrix}.
  \end{align}
  
  Still by \eqref{eq:esp}-\eqref{eq:cov}, we see that only $\EFp{t}{X^{t,x,\Theta}_{\tau_1}}$ depends upon $x$. Thus we get:
  
  \begin{align}
    \frac{\partial f(t,0,\log(x),s,a,y)}{\partial x} = \frac{\Sigma^{-1}_{1,2} a + \Sigma^{-1}_{2,2} \times \left( \log(y/x) - \int_t^{\tau_1} \alpha^{t,\Theta}_r \ud r + \frac{\sigma^2}{2} (\tau_1-t) \right)}{x} f(t,0,\log(x),s,a,y).
  \end{align}
  
  Reinjecting this equality into \eqref{eq:derivLx1} and rewriting the result in term of expectations, we finally get the result.
\end{proof}

The function $\Delta(t_i,\cdot)$ is computed using the Monte Carlo estimator given in \eqref{eq:derivLx} as in \eqref{eq approx ell N} where we simulate in addition the weight $H$.

\paragraph{Sensitivities with respect to the interest rates curve.} We consider now the derivatives with respect to the interest rates curve. For $i=1,2,3$, we want to compute:
\begin{equation*}
\frac{\partial \ell}{\partial \theta_i}(t,x,\Theta)= \frac{\partial}{\partial \theta_i} \espQ{e^{-\int_t^T r^{t,\Theta}_s \ud s} G(S^{t,x,\Theta})}, \ i = 1,2,3.
\end{equation*}

\begin{prop}For all $(t,x,\Theta) \in [0,1]\times(0,\infty)\times\R^3$ and all $i = 1,2,3$, we have the following identity (where we have set $\tau_0 = t$):
  \begin{align}\label{eq:derivLtx}
    \frac{\partial \ell}{\partial \theta_i} (t,x,\Theta)&= \espQ{e^{-\int_t^T r^{t,\Theta}_s \ud s} G(S^{t,x,\Theta}) H^{x,\Theta}_i\left(  (\xi^t_{\tau_l}, e^{-\int_{\tau_{l-1}}^{\tau_l} \xi^t_u \ud u}, S^{t,x,\Theta}_{\tau_l})_{l = 1,\dots,\kappa}\right)},
  \end{align}
  with
  \begin{align}\label{eq:poidsTx}
    H^{t,x,\Theta}_i\left((r_\ell, a_\ell, s_\ell)_{\ell=1,\dots,\kappa}\right) = - \int_t^{\tau_\kappa} h^{t,i}_s \ud s + \sum_{\ell = 1}^{\kappa} \left( \int_{\tau_{\ell-1}}^{\tau_\ell} h^{t,i}_s \ud s\right)& \nonumber \Big( (\Sigma^{\tau_{\ell-1},\tau_\ell})^{-1}_{1,3} (r_\ell - \mu^{\tau_{\ell-1},\tau_\ell}_1)\\ & \nonumber +  (\Sigma^{\tau_{\ell-1},\tau_\ell})^{-1}_{2,3} (a_\ell - \mu^{\tau_{\ell-1},\tau_\ell}_2)\\ &+ (\Sigma^{\tau_{\ell-1},\tau_\ell})^{-1}_{3,3} (\log(s_\ell) - \mu^{\tau_{\ell-1},\tau_\ell}_3)\Big),
  \end{align}
  where $\mu^{s,u}$ and $\Sigma^{s,u}$ are the mean and the covariance matrix of the Gaussian vector $(\xi^t_u,A^t_u,X^{t,x,\Theta})$ conditionally upon $\cF_s$.
\end{prop}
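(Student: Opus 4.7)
The plan is to mirror the derivation of $\partial_x \ell$ in the previous proposition: I would rewrite $\ell(t,x,\Theta)$ as a multi-dimensional integral against the joint density of the underlying Gaussian process, differentiate in $\theta_i$ under the integral sign via a log-derivative identity, and repackage the result as an expectation with an explicit weight.

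First, using the decomposition $r^{t,\Theta}=\xi^t+\alpha^{t,\Theta}$ from Lemma~\ref{lem}, the discount factor factors as
$$e^{-\int_t^T r^{t,\Theta}_s\,\ud s}=e^{-\int_t^T \alpha^{t,\Theta}_s\,\ud s}\prod_{\ell=1}^{\kappa} e^{-\int_{\tau_{\ell-1}}^{\tau_\ell}\xi^t_u\,\ud u},$$
isolating the whole $\theta_i$-dependence coming from the rate side into the deterministic prefactor. By the Markov property together with Proposition~\ref{propSimQ}, the joint law of $(\xi^t_{\tau_\ell},A^{t,\tau_{\ell-1}}_{\tau_\ell},X^{t,x,\Theta}_{\tau_\ell})_{\ell=1,\ldots,\kappa}$ has a density that factorises as a product of three-dimensional conditional Gaussian densities $p_\ell^\Theta$. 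Each $p_\ell^\Theta$ has covariance $\Sigma^{\tau_{\ell-1},\tau_\ell}$ (independent of $\Theta$) and mean $\mu^{\tau_{\ell-1},\tau_\ell}$, and by inspection of \eqref{eq:esp} the $\Theta$-dependence enters \emph{only} through the third (log-stock) coordinate of the mean, via the term $\int_{\tau_{\ell-1}}^{\tau_\ell}\alpha^{t,\Theta}_r\,\ud r$.

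Differentiating the resulting multi-integral in $\theta_i$ produces two types of contributions. The prefactor contributes $-\int_t^T \partial_{\theta_i}\alpha^{t,\Theta}_s\,\ud s=-\int_t^{\tau_\kappa} h^{t,i}(s)\,\ud s$ times the original integrand, which is exactly the first term of \eqref{eq:poidsTx}. For each conditional density $p_\ell^\Theta$, since its covariance does not depend on $\theta_i$, the standard Gaussian score identity
$$\partial_{\theta_i}\log p_\ell^\Theta(y)=(y-\mu^{\tau_{\ell-1},\tau_\ell})^{\top}(\Sigma^{\tau_{\ell-1},\tau_\ell})^{-1}\,\partial_{\theta_i}\mu^{\tau_{\ell-1},\tau_\ell}$$
applies; since only the third component of $\mu^{\tau_{\ell-1},\tau_\ell}$ depends on $\theta_i$, with derivative $\int_{\tau_{\ell-1}}^{\tau_\ell} h^{t,i}(s)\,\ud s$, this reduces to $\int_{\tau_{\ell-1}}^{\tau_\ell} h^{t,i}(s)\,\ud s$ multiplied by exactly the bracketed factor appearing in the $\ell$-th summand of \eqref{eq:poidsTx}. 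Summing the $\kappa$ contributions and refolding the product of densities back into an expectation yields \eqref{eq:derivLtx}.

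The main technical obstacle is the book-keeping of where $\theta_i$ enters: thanks to the affine splitting $r^{t,\Theta}=\xi^t+\alpha^{t,\Theta}$ and to the explicit formulas \eqref{eq:esp}--\eqref{eq:cov}, only the deterministic prefactor and the log-stock coordinate of each conditional mean vary with $\theta_i$, while all covariances and the other two means remain constant in $\Theta$. Once this structure is identified, the argument is a one-line Gaussian score identity applied $\kappa$ times, up to the usual domination estimates needed to swap $\partial_{\theta_i}$ with the multiple integral (which follow from the fact that polynomials in $y$ are integrable against Gaussian densities with locally bounded parameters).
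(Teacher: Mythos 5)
Your proposal is correct and follows essentially the same route as the paper: the paper's own (sketchier) proof likewise writes $\ell$ as an integral against the product of conditional Gaussian densities, differentiates the deterministic prefactor $e^{-\int_t^T\alpha^{t,\Theta}_s\,\ud s}$ and the densities in $\theta_i$, and notes that the short rate values $\xi^t_{\tau_\ell}$ must be added as integration variables, i.e.\ one passes to the three-dimensional densities over $\R^{3\kappa}$ exactly as you do from the start. Your explicit use of the Gaussian score identity and the observation that $\Theta$ enters only the third coordinate of each conditional mean is just a cleaner write-up of the same computation.
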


\begin{proof}
  Performing a similar analysis as the one above,
  \begin{multline}\label{eq:derivLTx1}
    \frac{\partial \ell}{\partial \theta_i} (t,x,\Theta)= \frac{\partial e^{-\int_t^T \alpha^{t,\Theta}_s \ud s}}{\partial \theta_i} \espQ{e^{-\int_t^T \xi^t_s \ud s} G(S^{t,x,\Theta})} + e^{-\int_t^T \alpha^{t,\Theta}_s \ud s} \int_{\R^{2\kappa}} e^{-a_1} \prod_{\ell=1}^{\kappa-1} e^{-a_{\ell+1}} G(e^{x_1}, \dots, e^{x_\kappa}) \\ \frac{ \partial \left(p^\Theta(t,0, \log(x),1,a_1,x_1) \dots p^\Theta(t_{\kappa-1},a_{\kappa-1},x_{\kappa-1},t_\kappa,a_\kappa,x_\kappa)\right)}{\partial \theta_i} \ud a_1 \cdots \ud a_\kappa \ud x_1 \cdots \ud x_\kappa.
  \end{multline}
  
  Here, the computations are more involved since every density depends upon the $a_i, i=1,2,3$, but the idea is the same as before.
  
  The only difference is that, when we use \eqref{eq:esp}-\eqref{eq:cov} to differentiate, we see that the short term itself appears in the formulae. This is not a problem as we can rewrite the previous integral as an integral over $\R^{3\kappa}$, with the short rate process taken at times $\tau_l, l = 1, \dots, \kappa$, as new variables to integrate against.
\end{proof}

The quantity $\frac{\partial \ell}{\partial \theta_i}$ ($i=1,2,3$) is computed using the Monte Carlo estimator of the formula \eqref{eq:derivLtx}. Then solving the system \eqref{eq:rho} allows to obtain the coefficients $\rho_1,\rho_2,\rho_3$ .

This method to calculate derivatives allows us to compute the function $\ell(t,x,\Theta)$ and its four derivatives with only one Monte Carlo simulation. Furthermore, given a risk-neutral scenario, each quantity involved in formulae \eqref{eq:poidsStock} and \eqref{eq:poidsTx} can be exactly computed by integrating the elementary functions $h^{t,i}$ and by inverting real symmetric matrices of size $3 \times 3$. Thus, the weight functions $H^{t,x,\Theta}, H^{t,x,\Theta}_i$ are easily and accurately computed.

\subsection{Sparse Grid Approach} \label{gridapproach}

The \emph{nested simulation} approach requires the approximation of the function $\ell$ and its derivative 
$\frac{\partial \ell}{\partial x}$, $\frac{\partial \ell}{\partial \theta_i}$, $i=1,2,3$  for each path $(\bar{S}^j_t, \bar{\Theta}^j_t)_{t \in \Gamma}$ of the market parameters. These values are computed on the fly which is quite time consuming.

\vspace{2mm} We suggest here an alternative method which will pre-compute the quantities of interest  ($\ell$ and its derivatives) and store them. The requested value for a given market parameter will then be obtained by an interpolation procedure.

\vspace{2mm} A first simple approach is to consider an equidistant grid of the domain $A:=\prod_{l=1}^d[m_p,M_p]$ which is a truncation of the support of $\mathcal{X}$ ($\R^d$, $d=4$ in our setting). Then one can use a multi-linear interpolation to reconstruct the function in the whole space. If one sets $2^p$ points in one dimension, the total number of points will be $2^{dp}$ for one function at one given time and overall $(n+1)2^{dp}$ to store (here $d=4$). This will become rapidly too big, especially if one allows the number of market parameters to grow. This is a typical example of the ``curse of dimensionality'' encountered in numerical analysis when dealing with problem of high dimension.

\vspace{2mm} Instead of considering a regular grid, we shall rely  on the use of \emph{sparse grid}, which allows to lower the number of points required to store the numerical approximation of the function. We now present rapidly the main concepts linked to sparse grids, see \cite{bungartz_sparse_2004} for a comprehensive survey. In our numerical examples, see Section \ref{se:num}, the sparse grid will be computed using the StOpt C++ library \cite{gevret:hal-01361291}.

\vspace{2mm}
For each multi-index $\mathbf{k} \le \mathbf{l}$, we define a grid mesh $h_\mathbf{k} = 2^{-\mathbf{k}}$ and grid points 
$$\check{y}_{\mathbf{k},\mathbf{i}} = (m_1 + i_1(M_1-m_1)h_{k_1}, \dots, m_d + i_d(M_d-m_d)h_{k_d}),\; \mathbf{0} \leq \mathbf{i} \leq 2^\mathbf{k}.$$

Using the \emph{hat} function,
\begin{align} 
y \in \R \mapsto
\phi (y):= \begin{cases}
1-|y| & \mathrm{if} \ y \in [-1,1] \\
0      &   \mathrm{otherwise} \end{cases}
\end{align}
and we can associate to the previous grid a set of nodal basis function:
\begin{align*}
y \in \R^d \mapsto \phi_{\mathbf{k},\mathbf{i}} (y;A)= \prod_{l=1}^d\phi(\frac{y_l-\check{y}^l_{\bm{k},\bm{i}}}{2^{-i_l}})\;.
\end{align*}

When using a ``full'' linear interpolation, the function is approximated using the whole set of nodal basis function at the finest level $\mathbf{l}$.
Instead, we consider the sparse grid nodal space of order $p$ defined by
$$\mc{V}_\kappa := \mathrm{span} \{ \phi_{\bm{l},\bm{j}}  ;  (\bm{l},\bm{j}) \in \mc{I}_p (A)   \},$$
where
\begin{align}
\mc{I}_\kappa :=  \big\{ (\bm{l},\bm{j})  : & \quad 0\leq \sum_{i=1}^d l_i \leq p; \quad  \bm{0}\leq \bm{j} \leq \bm{2^l}; \notag \\ & ( l_i>0 \text { and }  j_i \text{ is odd})   \text{ or }  ( l_i=0 ),  \text{ for } i=1,\ldots, d \big\}  \label{Eq:DefIp}.  
\end{align}

For a function $\psi:A \rightarrow \R$ with support in $A$, we define its associated $\mc{V}_\kappa$-interpolator by
\begin{align}  \label{eq sparse ope}
\sprse{\kappa}{A}{\psi}{y} :=\sum_{ (\bm{l},\bm{j}) \in \mc{I}_\kappa (A) } \theta_{\bm{l},\bm{j}}(\psi;A) \phi_{\bm{l},\bm{j}}(y; A) 
\end{align}
where the operator $\theta_{\bm{l,j}}$ can be defined recursively in terms of $r$, the dimension of $\bm{l}$, by:
\begin{equation} \theta_{\bm{l},\bm{j}} (\psi;A) = 
  \begin{cases}  
  \psi(\check{y}_{\bm{l},\bm{j}}); &  r=0\\
  \theta_{\bm{l}-, \bm{j}-}  (\psi ( \cdot, \check{y}^r_{l_r,j_r} ) ;A-)    ;  & l_r =0\\
     \theta_{\bm{l}-, \bm{j}-}(\psi( \cdot, \check{y}^r_{l_r,j_r} );A-) -\frac{1}{2} \theta_{\bm{l}-, \bm{j}-}(\psi( \cdot, \check{y}^r_{l_r,j_r-1} );A-)   \\
     \qquad\qquad\qquad\qquad\qquad\  -\frac{1}{2} \theta_{\bm{l}-, \bm{j}-}(\psi(\cdot, \check{y}^r_{l_r,j_r+1} )   ;A-) ; &  l_r >0
  \end{cases} 
  \label{Eq:DefTheta}
\end{equation}  
 where,  for a hypercube $A= \prod_{l=1}^d [m_l,M_l] $, $A- := \prod_{l=1}^{d-1} [m_l,M_l]$ and for a multi-index $\bm{k}$ with dimension $r \ge 1$, $\bm{k}-=(k_1,\dots,k_{r-1})$.
 
 \vspace{2mm} Let us now introduce the approximation that we use to compute the loss distribution, namely
 \begin{align}\label{eq def sparse func}
 \ell^\mathcal{S}(\cdot) := \sprse{\kappa}{}{\ell^\mathcal{N}}{\cdot}, &
 \left(\frac{\partial \ell}{\partial x}(t,\cdot) \right)^\mathcal{S} := \sprse{\kappa}{}{\left(\frac{\partial \ell}{\partial x}(t,\cdot) \right)^\mathcal{N}}{\cdot}
 \\
 & \text{ and }  \left(\frac{\partial \ell}{\partial \theta_i}(t,\cdot) \right)^\mathcal{S} := \sprse{\kappa}{}{\left(\frac{\partial \ell}{\partial \theta_i}(t,\cdot) \right)^\mathcal{N}}{\cdot}\; i \in \{1,2,3\}\;,\; t \in \Gamma\;. \nonumber
 \end{align}
These functions are built by computing the coefficients appearing in \eqref{eq sparse ope}, which are then stored in memory. For $\ell^\mathcal{S}$ say, this amounts to compute the function $\ell^\mathcal{N}$ on the sparse grid $\mathcal{V}_p$ and this is done by Monte Carlo simulation as in the previous approach, recall the definition of $\ell^\mc{N}$ in \eqref{eq approx ell N}.

\vspace{2mm}

\paragraph{Complexity} The main limitation of this method is the memory usage and the time to pre-compute the functions. This is proportional to the number of point in the grid. This number can be estimated to be of $O(2^{\kappa-d+1}\frac{(\kappa-d+1)^{d-1}}{(d-1)!})$, see Proposition 4.1 in \cite{bungartz_sparse_2004}. Let us insist on the fact that this is done ``offline'' compared to the \emph{nested simulation} approach. For the ``online'' computations, the main effort is put in the evaluation of the function which is slightly more evolved than a linear interpolation and is of $O(\kappa)$ where $\kappa$ is the maximum level that is chosen.

\begin{Remark}\label{remPara} The computations at each point being independant, this Sparse Grid approach can be easily parallelised, hence improving further the gain of time observed in Subsection \ref{numComp}.\end{Remark}

\subsection{Convergence study}
The goal is to obtain a reasonable approximation of the risk associated to the loss distribution of the balance sheet in an efficient way. 
In this section, we explain why the methods introduced above are indeed good approximations of the risk indicators.
We also study theoretically the numerical complexity of both methods in terms of memory and time consumption.

\subsubsection{Error analysis}

For the risk estimation, we will investigate a root Mean Square Error (rMSE) of the following form 
\begin{align*}
\epsilon^\upsilon := \esp{ |\varrho(p_1\sharp\eta) - \varrho(p_1^\upsilon\sharp\eta^N) |^2}^\frac12\,, \text{ for } \upsilon \in \{ \mathcal{N},\mathcal{S}\}.
\end{align*}
The expectation operator $\esp{\cdot}$ acts under $\P^{\otimes N}\otimes\Q^{\otimes M}$, namely it averages  both on the simulation of the market parameters under the \emph{real-world} measure used for calibration and the risk neutral evolution of the market model under the pricing measure.

\vspace{5pt}
The first observation is that under reasonable assumptions on the risk measure used in the risk indicator, the error performed in the numerical simulation can be separated in two main contribution: the error due to the sampling of the loss distribution coming from the sampling of the market parameters and the error made when approximating the different pricing and hedging functions.
\begin{lemma}\label{le error separation}
Assume that $\varrho$ has a \emph{Monotonic} and \emph{Cash Invariant} lift $\Re$, then
\begin{align*}
\epsilon^\upsilon \le   \esp{|\varrho(p_1\sharp \eta) - \varrho(p_1\sharp \eta^N)|^2}^\frac12 + 
\esp{\sup_{1 \le j \le N}|p_1(\mathcal{X}^j)-p^\upsilon_1(\mathcal{X}^j)|^2 }^\frac12 \;.
\end{align*}
\end{lemma}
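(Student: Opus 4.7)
The plan is to apply the triangle inequality to split $\epsilon^\upsilon$ into a ``sampling'' error and an ``approximation'' error. Precisely, inserting the intermediate term $\varrho(p_1\sharp \eta^N)$ and using the $L^2(\P^{\otimes N}\otimes\Q^{\otimes M})$ triangle inequality yields
\begin{equation*}
\epsilon^\upsilon \;\le\; \esp{|\varrho(p_1\sharp \eta) - \varrho(p_1\sharp \eta^N)|^2}^{1/2} + \esp{|\varrho(p_1\sharp \eta^N) - \varrho(p_1^\upsilon\sharp \eta^N)|^2}^{1/2}.
\end{equation*}
The first term is already in the desired form (it quantifies only the Monte Carlo error on $\nu$); the whole work is to bound the second.

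The key observation for the second term is that $p_1\sharp \eta^N$ and $p_1^\upsilon\sharp \eta^N$ are both empirical measures supported on the same finite family of indices $j=1,\dots,N$. Introduce the auxiliary probability space $(\{1,\dots,N\},\mathrm{Unif})$ and, on it, the random variables $\hat Y(j):=p_1(\mathcal{X}^j)$ and $\hat Z(j):=p_1^\upsilon(\mathcal{X}^j)$. By construction, the laws of $\hat Y$ and $\hat Z$ (conditionally on the outer randomness $(\mathcal{X}^j)_{j}$) are exactly $p_1\sharp \eta^N$ and $p_1^\upsilon\sharp \eta^N$. Law invariance then gives the identities $\varrho(p_1\sharp \eta^N)=\Re[\hat Y]$ and $\varrho(p_1^\upsilon\sharp \eta^N)=\Re[\hat Z]$.

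Now set $\mathcal{E}:=\sup_{1\le j\le N}|p_1(\mathcal{X}^j)-p_1^\upsilon(\mathcal{X}^j)|$. From the auxiliary-space viewpoint $\mathcal{E}$ is a deterministic constant, and pointwise in $j$ one has $\hat Z - \mathcal{E}\le \hat Y\le \hat Z+\mathcal{E}$. Applying monotonicity and then cash invariance of $\Re$ to this double inequality yields $\Re[\hat Z]-\mathcal{E}\le \Re[\hat Y]\le \Re[\hat Z]+\mathcal{E}$, hence
\begin{equation*}
|\varrho(p_1\sharp \eta^N)-\varrho(p_1^\upsilon\sharp \eta^N)| \;\le\; \mathcal{E} \;=\; \sup_{1\le j\le N}|p_1(\mathcal{X}^j)-p_1^\upsilon(\mathcal{X}^j)|.
\end{equation*}
Taking $L^2$ norms with respect to $\P^{\otimes N}\otimes \Q^{\otimes M}$ on both sides and plugging into the triangle inequality above gives the claimed bound.

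The only subtle step is the third paragraph: one must be careful to keep the outer randomness $(\mathcal{X}^j)$ fixed when invoking law invariance on the auxiliary space, so that $\mathcal{E}$ may be treated as a constant when cash invariance of $\Re$ is applied; once this is set up, the monotonicity/cash-invariance argument is immediate and no regularity of $\varrho$ beyond assumptions 1--2 is needed (in particular, convexity or positive homogeneity are not used, which is why the same bound applies to $V@R$ as well as to spectral risk measures).
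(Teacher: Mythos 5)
Your proof is correct and follows essentially the same route as the paper: the paper also realises the empirical measures $p_1\sharp\eta^N$ and $p_1^\upsilon\sharp\eta^N$ as laws of random variables built from the same sample, applies monotonicity and cash invariance of $\Re$ with the sup-error as the cash shift, and then combines the pointwise bound with Minkowski's inequality (you merely apply Minkowski first and the pointwise bound second, which is an inessential reordering). Your explicit uniform auxiliary-space construction just makes precise what the paper does implicitly with $\widehat{\mathcal{X}}^N(\omega)$.
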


\begin{proof} We denote by $\widehat{\mathcal{X}}^N(\omega)$ the random variable with distribution $\eta^N(\omega)$.
Note that
\begin{align}
p_1(\widehat{\mathcal{X}}^N(\omega)) \le p^\upsilon_1(\widehat{\mathcal{X}}^N(\omega)) + \sup_{1\le j\le N} |p_1(\mathcal{X}^j(\omega))-p^\upsilon_1(\mathcal{X}^j(\omega))|
\end{align}
which leads to 
\begin{align*}
\Re(p_1(\widehat{\mathcal{X}}^N(\omega))) \le  \Re(p^\upsilon_1(\widehat{\mathcal{X}}^N(\omega))) + \sup_{1\le j\le N} |p_1(\mathcal{X}^j(\omega))-p^\upsilon_1(\mathcal{X}^j(\omega))|\;.
\end{align*}
By symmetry, we easily get
\begin{align}
|\varrho(p_1\sharp \eta) - \varrho(p^\upsilon_1\sharp \eta^N(\omega))| \le |\varrho(p_1\sharp \eta) - \varrho(p_1\sharp \eta^N(\omega))| 
+ 
\sup_{1\le j\le N} |p_1(\mathcal{X}^j(\omega))-p^\upsilon_1(\mathcal{X}^j(\omega))|
\end{align}
and then the proof is concluded using Minkowski inequality.
\end{proof}

The error due to the approximation of the function $p_1$ is well understood when the function is smooth enough. Note that the asset side of the function is quite involved and we will not attempt to obtain the condition for smoothness of the overall function $p_1$. We will now simply review the error done on the liability part $\ell(1,\cdot)$ assuming that the mapping  $G$ is bounded and
\begin{align}\label{eq ass liability}
(x,\Theta) \rightarrow \beta^{1,\Theta}_T G(S^{1,x,\Theta})  \in \mathcal{C}^2_b.
\end{align}
Even though, this cannot be almost surely true in the model presented above, we will assume that $\beta^{1,\Theta}_T$
is bounded in the discussion below.  A more precise analysis should take care of these extreme events arising with small probability. Another possibility would be to force the interest rate to be non-negative, by truncation or by considering a CIR type of model for \eqref{HW}.

\textcolor{black}{
\begin{lemma} \label{le max approx function}
Assume that \eqref{eq ass liability} holds true.
Recall the definition of $\ell^\mathcal{N}$ in \eqref{eq approx ell N}, then 
\begin{align} \label{eq max mc error fine}
\esp{\max_{1 \le j \le N}|\ell_1(\mathcal{X}^j)-\ell^\mathcal{N}_1(\mathcal{X}^j)|^2}^\frac12 
 \le C \sqrt{\frac{\log(N)}{M} }\;.
\end{align}
\end{lemma}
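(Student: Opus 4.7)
The plan is to reduce the problem to a standard concentration bound for the maximum of sub-Gaussian random variables. Set $Z_j := \ell_1(\mathcal{X}^j) - \ell^\mathcal{N}_1(\mathcal{X}^j)$ and note that, by construction of the nested simulation estimator \eqref{eq approx ell N}, conditionally on the outer sample $\mathcal{X}^j$ the quantity $Z_j$ is the centered mean of $M$ i.i.d.\ random variables of the form $\beta^{1,\Theta^j,k}_T G(S^{1,\bar S^j_1,\Theta^j,k})$. Under assumption \eqref{eq ass liability} (interpreting $\beta^{1,\Theta}_T$ as bounded, as flagged in the discussion around the statement), each such summand is almost surely bounded by some constant $K$, so Hoeffding's inequality yields the conditional sub-Gaussian tail
\begin{equation*}
\P\bigl(|Z_j| > t \,\big|\, \mathcal{X}^j\bigr) \le 2 \exp\!\left(-\tfrac{M t^2}{2 K^2}\right),\qquad t \ge 0.
\end{equation*}
Because this estimate is uniform in $\mathcal{X}^j$, it also holds unconditionally, so each $Z_j$ is sub-Gaussian with proxy parameter $\sigma^2 = K^2/M$ (up to universal constants).

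Next I would apply a union bound to transfer sub-Gaussianity from a single $Z_j$ to the maximum: for any $t>0$,
\begin{equation*}
\P\!\left(\max_{1\le j\le N}|Z_j| > t\right) \le 2N \exp\!\left(-\tfrac{M t^2}{2 K^2}\right).
\end{equation*}
Note that independence across $j$ is not needed here; only the per-coordinate tail bound. Then the $L^2$-control of the maximum follows from tail integration:
\begin{equation*}
\esp{\max_{1\le j\le N}|Z_j|^2} = \int_0^\infty 2t\, \P\!\left(\max_{1\le j\le N}|Z_j| > t\right) \ud t.
\end{equation*}
Splitting the integral at $t^\star := K\sqrt{2\log(2N)/M}$ and bounding the integrand by $1$ on $[0,t^\star]$ and by the Gaussian tail on $[t^\star,\infty)$ gives the desired estimate $\esp{\max_j|Z_j|^2} \le C K^2 \log(N)/M$; taking the square root yields exactly \eqref{eq max mc error fine}.

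The whole argument is really a classical maximal inequality, and the only genuinely delicate point is the boundedness hypothesis needed to invoke Hoeffding: in the present Hull \& White model the discount factor $\beta^{1,\Theta}_T$ is not uniformly bounded, which is why the statement is prefaced by an a priori assumption. A fully rigorous version would either truncate the short rate (as suggested in the paragraph before the lemma) or replace Hoeffding by a Bernstein/sub-exponential bound on $\beta^{1,\Theta}_T G(S^{1,x,\Theta})$ using Gaussian moment estimates on $\int_t^T r^{t,\Theta}_s \ud s$; the $\sqrt{\log N/M}$ rate would then be preserved, possibly with an extra logarithmic factor. This is the only step where the analysis is not entirely routine.
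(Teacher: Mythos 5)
Your proof is correct and follows essentially the same route as the paper's: a Hoeffding bound on the conditional Monte Carlo error for each $\mathcal{X}^j$, a bound on the maximum over the $N$ outer samples (you use a union bound where the paper exploits independence via $1-(1-u)^N\le Nu$, which amounts to the same estimate), and a tail-integration split at the $\sim\log N$ threshold yielding the rate $\sqrt{\log(N)/M}$. Your closing remark on the unboundedness of $\beta^{1,\Theta}_T$ matches the caveat the paper itself makes just before the lemma, so nothing is missing.
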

}

\begin{proof}
We denote by $c$ the bound on the mapping $(x,\Theta) \rightarrow \beta^{1,\Theta}_T G(S^{1,x,\Theta})$ (recall the discussion after equation \eqref{eq ass liability}) and thus the bound on $\ell_1$. For the reader's convenience, we introduce
$$\Sigma^j_M := \sum_{k=1}^M\ell_1(\mathcal{X}^j)-\beta^{t,\mathcal{X}^j,k}_T G(S^{t,\mathcal{X}^j,k}_T), $$
and observe that $\esp{\Sigma^j_M} = 0$ and recall that the $(\Sigma_M^j)$ are i.i.d.
We have, using Hoeffding Inequality,
\begin{align*}
\esp{\mathbf{1}_{\{|\Sigma_M^j|^2>z\}}} \le 2 \exp \left(-\frac{z}{cM}\right) \;.
\end{align*}
Using the independence property, we obtain
\begin{align}\label{eq control distrib}
\esp{\mathbf{1}_{\{ \max_j |\Sigma_M^j|^2\le z\}}} \ge \left(1 - 2 \exp \left(-\frac{z}{cM}\right) \right)^N\;.
\end{align}
Now set $\xi := c M \log(N)$ and compute 
\begin{align*}
\esp{\max_{1 \le j \le N}|\Sigma^j_M|^2} &= \int_0^\infty \esp{\mathbf{1}_{\{ \max_j |\Sigma_M^j|^2 > z \}} } \ud z
\\
& \le \xi + \int_\xi^\infty \esp{\mathbf{1}_{\{ \max_j |\Sigma_M^j|^2 > z \}} } \ud z
\\
&\le \xi +  \int_\xi^\infty \left \{1-\left(1 - 2 \exp \left(-\frac{z}{cM}\right) \right)^N \right \} \ud z.
\end{align*}
Now, we observe that for $N \ge 2$, $2 \exp \left(-\frac{z}{cM}\right) \le 1$ for $z \ge \xi$. Using the fact that
$1-(1-u)^N \le N u$, for $u \in [0,1]$, we get
\begin{align*}
\esp{\max_{1 \le j \le N}|\Sigma^j_M|^2} \le \xi + 2 N \int_\xi^\infty  \exp \left(-\frac{z}{cM}\right) \ud z
\end{align*}
which leads to 
\begin{align*}
\esp{\max_{1 \le j \le N}|\Sigma^j_M|^2} \le \xi + 2 c M, 
\end{align*}
and concludes the proof.


\end{proof}

We conclude this section by giving the overall estimation error induced by the numerical procedure above. We will admit that the upper bound for the error given for $\ell(1,\cdot)$ in Lemma \ref{le max approx function} holds true for the PnL function $p(1,\cdot)$ with a scaling by $n$ coming from the number of rebalancing date.
\begin{theorem} \label{thm}
Assume  that $\varrho_h$ is a spectral risk measure. 
Then, the following holds, for some $\alpha > 0$,
\begin{enumerate}
\item for the \emph{Nested Simulation} approach
\begin{align}\label{eq th err nested}
\epsilon^\mathcal{N} \le  C \left(\frac{1}{N^\alpha} + n  \sqrt{\frac{\log(N)}{M} }\right)\;;
\end{align}
\item for the \emph{Sparse Grid} approach with maximum level $\kappa$
\begin{align}\label{eq th err sparse}
\epsilon^\mathcal{S} \le 
 C \left(\frac{1}{N^\alpha} + n \left\{ \sqrt{\frac{\log(N)}{M} }  + 2^{-2\kappa}(\kappa-d+1)^{(d-1)} \right\} \right)\;.
 %
\end{align}
\end{enumerate}
\end{theorem}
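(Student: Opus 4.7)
The plan is to apply Lemma \ref{le error separation} to split $\epsilon^\upsilon$ into a sampling error and a function-approximation error, then treat each piece separately for the two schemes. Concretely, one writes
$$
\epsilon^\upsilon \;\le\; \underbrace{\esp{|\varrho_h(p_1\sharp \eta) - \varrho_h(p_1\sharp \eta^N)|^2}^{1/2}}_{(A)} \;+\; \underbrace{\esp{\sup_{1 \le j \le N}|p_1(\mathcal{X}^j)-p^\upsilon_1(\mathcal{X}^j)|^2}^{1/2}}_{(B^\upsilon)}.
$$
Term $(A)$ is common to both approaches and gives the $N^{-\alpha}$ contribution; the scheme-dependence enters through $(B^\upsilon)$.

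For $(A)$, I would exploit the fact that a spectral risk measure $\varrho_h$ with $h \in L^\infty([0,1])$ is Lipschitz continuous in Wasserstein-$1$ distance on $\mathcal{P}_1(\R)$ (the Kusuoka representation yields $|\varrho_h(\eta) - \varrho_h(\eta')| \le \|h\|_\infty W_1(\eta,\eta')$). Combining this with the standard convergence rate of the one-dimensional empirical measure, namely $\esp{W_1(p_1\sharp \eta, p_1\sharp \eta^N)} \lesssim N^{-1/2}$ under a mild moment condition on $p_1(\bar{\mathcal X})$ (which can be read off from $G$ bounded and \eqref{eq ass liability}), one obtains $(A) \le C N^{-\alpha}$ with $\alpha = 1/2$.

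For $(B^\mathcal{N})$, I would extend Lemma \ref{le max approx function} from the terminal liability $\ell$ to the full PnL function $p_1$ as authorized by the authors in their discussion before the theorem. Decomposing $p_1$ along \eqref{eq PnL function} into the terminal term $\ell(t_n,\cdot)$ plus the $n$ hedging increments, each summand is approximated by an independent Monte Carlo estimator which, by the same Hoeffding-plus-max argument as in Lemma \ref{le max approx function}, contributes $\sqrt{\log(N)/M}$ in the $L^2$-max norm; a union bound over the $O(n)$ summands yields the overall factor $n\sqrt{\log(N)/M}$, proving \eqref{eq th err nested}.

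For $(B^\mathcal{S})$, the key step is the triangle inequality
$$
|p_1(\mathcal X^j) - p^{\mathcal S}_1(\mathcal X^j)| \;\le\; |p_1(\mathcal X^j) - \pi_{\mathcal V_\kappa}[p_1](\mathcal X^j)| \;+\; \bigl|\pi_{\mathcal V_\kappa}[p_1 - p^{\mathcal N}_1](\mathcal X^j)\bigr|,
$$
where $\pi_{\mathcal V_\kappa}$ is the sparse grid interpolation operator from \eqref{eq sparse ope} and $p^{\mathcal S}_1 = \pi_{\mathcal V_\kappa}[p^{\mathcal N}_1]$ by \eqref{eq def sparse func} (applied componentwise to $\ell$, $\Delta$ and $\rho^j$). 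The first term is bounded by the classical sparse grid error estimate recalled in Proposition 4.1 of \cite{bungartz_sparse_2004}, giving the $2^{-2\kappa}(\kappa-d+1)^{d-1}$ contribution provided $p_1$ has bounded mixed second derivatives; the second term is controlled by the uniform boundedness of $\pi_{\mathcal V_\kappa}$ times the maximum of the Monte Carlo errors at the (deterministic) grid nodes, to which the same Hoeffding-union-bound argument as in Lemma \ref{le max approx function} applies, reproducing the $n\sqrt{\log(N)/M}$ term (the number of grid points being polynomial in $\kappa$ and absorbed in the constant or the logarithm). Combining with $(A)$ yields \eqref{eq th err sparse}.

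The main obstacle is the regularity input required for the sparse grid bound: one needs the mixed second partial derivatives of $p_1$ to be controlled, which is delicate since $p_1$ involves the Delta and the swap sensitivities $\rho^j$ appearing through the inverse of the matrix \eqref{eq:rho}. Even granting assumption \eqref{eq ass liability} for the liability side, a careful verification that the hedging coefficients inherit $\mathcal C^2$ smoothness (via the explicit formulae of Proposition \ref{prop:pricesIR} and non-degeneracy of the sensitivity matrix) would be required to make the application of the Bungartz--Griebel estimate fully rigorous; as with the extension of Lemma \ref{le max approx function}, I would state this smoothness as a working hypothesis consistent with the admission in the paragraph preceding the theorem.
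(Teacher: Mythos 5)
Your overall skeleton matches the paper's: the split of $\epsilon^\upsilon$ via Lemma \ref{le error separation}, the control of the sampling term by continuity of the spectral risk measure with respect to a Wasserstein distance combined with empirical-measure rates (the paper invokes Corollary 11 of \cite{pichler2013evaluations} with $\mathcal W_2$ and then \cite{fournier2015rate}; your $W_1$--Lipschitz argument needs the extra hypothesis $h\in L^\infty$, which is not automatic for a non-decreasing density on $[0,1]$ and is not required by the theorem as stated, but this is a minor variant), and the treatment of $(B^{\mathcal N})$ by extending Lemma \ref{le max approx function} to $p_1$ with the factor $n$, which the paper also simply admits. Up to these points your proof of \eqref{eq th err nested} is essentially the paper's.

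The genuine gap is in your treatment of $(B^{\mathcal S})$. Your decomposition $|p_1-\pi_{\mathcal V_\kappa}[p_1]|+|\pi_{\mathcal V_\kappa}[p_1-p_1^{\mathcal N}]|$ relies on two claims that do not hold as stated: the sparse-grid interpolation operator is \emph{not} uniformly bounded in sup norm (each hierarchical surplus is a second-difference stencil, and $O(\kappa^{d})$ hierarchical levels contribute at any evaluation point, so the Lebesgue constant grows polynomially in $\kappa$), and the number of grid nodes is $O(2^{\kappa}\kappa^{d-1})$, i.e.\ exponential in $\kappa$, not polynomial, so the Hoeffding union bound over the nodes yields $\sqrt{\kappa/M}$ rather than $\sqrt{\log(N)/M}$. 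Corrected, your route produces an MC term polluted by $\kappa$-dependent factors, which is strictly weaker than \eqref{eq th err sparse}; moreover your first term requires bounded mixed second derivatives of $p_1$ itself (hence of $\Delta$ and the $\rho^j$ obtained through the inversion of \eqref{eq:rho}), a much heavier hypothesis than \eqref{eq ass liability}. The paper's proof avoids all of this by writing $\ell_1^{\mathcal S}=\ell_1^{\mathcal N}+\left(\pi_{\mathcal V_\kappa}[\ell_1^{\mathcal N}]-\ell_1^{\mathcal N}\right)$: the first piece evaluated at the $\mathcal X^j$ is exactly the nested Monte Carlo error already controlled by Lemma \ref{le max approx function}, and the interpolation error of the \emph{estimator} is bounded pathwise, applying the estimate of Proposition 4.1 in \cite{bungartz_sparse_2004} to each sampled integrand $\phi^k(x,\Theta)=e^{-\int_1^T r^{1,\Theta,k}_s\,\mathrm{d}s}G(S^{1,x,\Theta,k})$, which is $\mathcal C^2_b$ under \eqref{eq ass liability}; averaging over $k$ preserves the deterministic sup-norm bound $C\,2^{-2\kappa}\kappa^{d-1}$, with no Lebesgue constant and no union bound over the grid. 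You should adopt this decomposition to recover \eqref{eq th err sparse} exactly.
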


\begin{proof}
1. We first show that 
\begin{align} \label{eq max mc+sparse error}
\esp{\max_{1 \le j \le N}|\ell_1(\mathcal{X}^j)-\ell^\mathcal{S}_1(\mathcal{X}^j)|^2}^\frac12 
 \le C \left( \sqrt{\frac{\log(N)}{M} } +   2^{-2\kappa}\kappa^{(d-1)}\right)\;.
\end{align}
Indeed, we have that
\begin{align*}
\ell^\mathcal{S}_1 & = \pi_{\mathcal{V}_\kappa}[\ell^\mathcal{N}_1 ] 
\\
&= \ell^\mathcal{N}_1 + \pi_{\mathcal{V}_\kappa}[\ell^\mathcal{N}_1 ] - \ell^\mathcal{N}_1\;.
\end{align*}
And we observe that
\begin{align*}
\pi_{\mathcal{V}_\kappa}[\ell^\mathcal{N}_1 ] - \ell^\mathcal{N}_1
=
\frac1M\sum_{j=1}^M \pi_{\mathcal{V}_\kappa}[e^{-\int_1^T r^{1,\cdot,k}_s \ud s} G(S^{1,\cdot,k})] - e^{-\int_1^T r^{1,\cdot,k}_s \ud s} G(S^{1,\cdot,k})
\end{align*}
Let us denote by $(x,\Theta) \mapsto \phi^k(x,\Theta) = e^{-\int_1^T r^{1,\Theta,k}_s \ud s} G(S^{1,x,\Theta,k})$ which is a random function as it depends on the random realisation of the $(r,S)$ process.
Under \eqref{eq ass liability}, $\phi^k$ is smooth enough to apply the results in Proposition 4.1 in \cite{bungartz_sparse_2004} and we obtain
\begin{align} \label{eq control sparse only}
|\pi_{\mathcal{V}_\kappa}[\ell^\mathcal{N}_1 ] - \ell^\mathcal{N}_1|_\infty  
\le
\frac1M\sum_{j=1}^M |\pi_{\mathcal{V}_\kappa}[\phi^k] - \phi^k|_\infty 
\le C 2^{-2\kappa}\kappa^{d-1}\;.
\end{align}
We then observe that
\begin{align*}
\esp{\max_{1 \le j \le N}|\ell_1(\mathcal{X}^j)-\ell^\mathcal{S}_1(\mathcal{X}^j)|^2}^\frac12 
 \le C \left( \esp{\max_{1 \le j \le N}|\ell_1(\mathcal{X}^j)-\ell^\mathcal{N}_1(\mathcal{X}^j)|^2}^\frac12
 +
|\pi_{\mathcal{V}_\kappa}[\ell^\mathcal{N}_1 ] - \ell^\mathcal{N}_1|_\infty 
  \right)\;.
\end{align*}
The proof of \eqref{eq max mc+sparse error} is concluded by combining the above inequality with \eqref{eq control sparse only} and Lemma \ref{le max approx function}.
\\
2. We now prove \eqref{eq th err nested}. Applying Lemma \ref{le error separation}, we obtain
\begin{align}\label{eq starting point}
\epsilon^\mathcal{N} \le   \esp{|\varrho(p_1\sharp \eta) - \varrho(p_1\sharp \eta^N)|^2}^\frac12 + 
\esp{\max_{1 \le j \le N}|\ell_1(\mathcal{X}^j)-\ell^\mathcal{N}_1(\mathcal{X}^j)|^2}^\frac12 \;.
\end{align} 
The second term in the right-hand side of the above inequality is controlled by using Lemma \ref{le max approx function}. 
We now study the first term in the right-hand side, which is the error introduced by  the sampling of the loss distribution. Applying  Corollary 11 in \cite{pichler2013evaluations} to the spectral risk measure, we first get 
\begin{align*}
\esp{|\varrho(p_1\sharp \eta) - \varrho(p_1\sharp \eta^N)|^2}^\frac12 \le C  \esp{\mathcal{W}_2(\eta,\eta^N)^2}^\frac12. \end{align*}
We then use Theorem 1 in \cite{fournier2015rate} to bound the  Wasserstein distance, which concludes the proof for this step.
\\
3. To prove \eqref{eq th err sparse}, we follow similar arguments as in step 2. but using \eqref{eq max mc+sparse error} instead of invoking Lemma  \ref{le max approx function}. 
\end{proof}
\begin{Remark} We can compare the bound obtained for the nested simulation with the ones in \cite{gordy2010nested}.
Using a different approach, the authors prove a very nice bound on the overall error given by
\begin{align*}
C \left(\frac1{\sqrt{N}} + \frac1{M}\right) \;,
\end{align*}
for the $V@R$ (which is not a spectral risk measure) and $AV@R$. Note that the term $\frac1M$ is obtained by cancellation of the first order term through an error expansion. It would be interesting to understand under which assumptions their bound can be retrieved in our setting of general spectral risk measure. This topic is left for further research. 
\end{Remark}


We conclude this Section by a short account on the numerical complexity of the two methods.

\vspace{4pt}
The \emph{Nested Simulation} approach is a pure ``online'' method which is very simple to implement but has a huge drawback in term of running time. Each time an estimation is requested the numerical complexity is overall of $nNM$, where recall $n$ is the number of rebalancing date, $M$ the number of sample for the risk neutral simulation and $N$ the number of sample for the real-world simulation. The memory requirements comes only from the estimation of the loss distribution and are of order $N$.

\vspace{4pt}
As already mentioned, the \emph{Sparse Grid} approach is both an ``online'' and ``offline'' method. In terms of memory requirement, it is thus greedier than the \emph{nested simulation} approach. On top of the memory needed to store the sample distribution (of order $N$), memory is also needed to store the sparse grid approximation $p^{\mathcal{S}}$, the requirement are of order $O(n 2^{\kappa-d+1}\frac{(\kappa-d+1)^{d-1}}{(d-1)!})$. In term of running time, the gain is important as the complexity of evaluating $p^{\mathcal{S}}$ is of $O(\kappa)$ only, where $\kappa$ is the maximum level used.

%
%
%

\section{Numerics} \label{se:num}
In the numerical applications below, we will compare the loss distribution obtained via our two numerical procedures by computing the Wasserstein distance between the two empirical distributions. Since the loss distribution is one-dimensional, we  use the following formula \cite{prokhorov1956convergence}: for two probability distribution on $\R$, $\eta$ and $\tilde{\eta}$,
\begin{align}
W_2(\eta,\tilde{\eta}) = (\int_0^1 |F^{-1}_\eta(u) -F^{-1}_{\tilde{\eta}}(u) |^{2} \ud u)^{\frac12} \;.
\end{align}
In the setting of empirical distributions, the above distance is easily computed. Suppose $\eta = \frac{1}{N} \sum_{i=1}^N \delta_{x_i}$ and $\tilde\eta = \frac{1}{N} \sum_{i=1}^N \delta_{y_i}$.
\\
\noindent We straightforwardly compute
\begin{align}\label{eq dist wasser empirical}
  W_2(\eta,\tilde\eta)^2 &= \sum_{i=1}^N \int_{\frac{i-1}{N}}^{\frac{i}{N}} |F^{-1}_\eta(u) - F^{-1}_{\tilde\eta}(u)|^2 \ud u 
                       =  \frac{1}{N} \sum_{i=1}^N |x_{(i)}-y_{(i)}|^2,
\end{align}
where the subscript $(i)$ refers to the $i$-th order statistic of the distribution, since $x_{(i)}$ (resp. $y_{(i)}$) is simply the $\frac{i}{N}$-th quantile of $\eta$ (resp. $\tilde\eta$).

Besides the Wasserstein distance between the two empirical distributions, we will also compare the  estimated V@R and AV@R, which are computed in a similar way. Indeed, for $\alpha \in (0,1]$, we have:
\begin{align}\label{eq var empirical}
  V@R_\alpha(\eta) &= F^{-1}_\eta(\alpha) 
  = x_{(i_\alpha)},
\end{align}
where {$\frac{i_\alpha-1}{N} < \alpha \le \frac{i_\alpha}{N}$ }, $i_\alpha \in \{1, \dots, N\}$.\\
For a given $\alpha \in (0,1]$, we observe that
\begin{align*}
  AV@R_{\alpha}(\eta) &= \frac{1}{1-\alpha} \int_{\alpha}^1 V@R_p(\eta) \ud p 
   = \frac{1}{1-\alpha} \left(\int_\alpha^{\frac{i_\alpha}{N}} V@R_p(\eta) \ud p+ \sum_{i=i_\alpha}^{N-1} \int_{\frac{i}{N}}^{\frac{i+1}{N}} V@R_p(\eta) \ud p \right)
\end{align*}
which leads to
\begin{align}    \label{eq avar empirical}               
             AV@R_{\bar{\alpha}}(\eta)     
                   &= \frac{1}{1-\alpha}\left( \{\frac{i_\alpha}{N} - \alpha\}x_{(i_\alpha)} + \frac1N\sum_{i=i_\alpha}^{N-1} x_{(i+1)}\right).
\end{align}

%

Using the formulae \eqref{eq dist wasser empirical}, \eqref{eq var empirical} and \eqref{eq avar empirical}, we will now present numerical results showing the efficiency and usefulness of the \emph{sparse grid} approach. We first start with a comparison with the classically used \emph{nested simulation} approach.

\subsection{Sparse grid approach versus nested simulations approach} \label{numComp}
We computed the empirical distribution of the PnL at horizon 1 year using the \emph{nested simulations} approach, recall Section \ref{nestedapproach}, and the \emph{sparse grid} approach, recall Section \ref{gridapproach}. \\
For both methods, we used a sample of size $N = 11000$ describing the \emph{real-world} evolution of $S$ and $\Theta$, recall Section \ref{underP}.\\
\textcolor{black}{For the nested simulations approach, using the overall error bound given in \cite{gordy2010nested} we want to approximate the risk-neutral expectations with Monte Carlo simulations with samples of size $M \simeq \sqrt{N}$, that is $M = 100$. In practice however, we observe that convergence has not occured yet and we observe non-neglectible changes in the risk measures taken into account, see Table \ref{tab:nested}. In this table, we compute the Wasserstein distance with respect to the distribution obtained for $M=10000$. Operational constraints do not allow us to change the size of the sample used for the Monte Carlo simulations under $\P$, We thus consider a sample of size $M = 2000$ for the \emph{risk-neutral} Monte Carlo simulations.
  \begin{table}
    \centering
    \begin{tabular}{|l|l|l|l|l|l|l|l}
      \hline
      $M$ & $100$ & $500$ & $1000$ & $2000$ & $10000$ \\
      \hline
      Wasserstein distance & 0.73 & 0.20 & 0.09 & 0.05 & 0 \\
      \hline
      V@R/AV@R $\alpha=0.005$ & 7.03 / 8.75 & 4.26 / 5.08 & 3.95 / 4.33 & 3.72 / 4.01 & 3.56 / 3.85 \\
      \hline
      V@R/AV@R $\alpha=0.01$ & 5.82 / 7.52 & 3.88 / 4.57 & 3.68 / 4.07 & 3.48 / 3.80 & 3.37 / 3.65 \\
      \hline
      V@R/AV@R $\alpha=0.05$ & 4.01 / 5.22 & 3.08 / 3.62 & 2.93 / 3.36 & 2.84 / 3.23 & 2.77 / 3.13 \\
      \hline
     V@R/AV@R $\alpha=0.1$ & 3.31 / 4.42 & 2.72 / 3.24 & 2.61 / 3.06 & 2.54 / 2.96 & 2.49 / 2.88 \\
      \hline
  \end{tabular}
  \caption{Comparison of the metrics obtained with Nested simulations for varying risk-neutral simulation sample size.} \label{tab:nested}
\end{table}}
Following Proposition \ref{propP}, we calibrated a Gaussian model such that $(X_1, (\theta_1)_1, (\theta_2)_1, (\theta_3)_1)$ has mean and covariance matrix given by:
\begin{align}
  \mu &= (4.1 \times 10^{-5}, 0.01, 0.03, 0.01), \\
  V &= \begin{pmatrix}0.004 & 3.2 \times 10^{-5} & 6.76 \times 10^{-6}& 0.000008 \\ 3.2 \times 10^{-5} & 3.1 \times 10^{-5} & 1.82 \times 10^{-5} & 1.5 \times 10^{-5} \\ 6.76 \times 10^{-6} & 1.82 \times 10^{-5} & 7.5 \times 10^{-5} & 8.1 \times 10^{-6} \\ 0.000008 & 1.5 \times 10^{-5} & 8.1 \times 10^{-6} & 2.7 \times 10^{-5} \end{pmatrix}
\end{align}
\noindent The risk-neutral simulations were computed by a Monte Carlo procedure, computed with the exact formulae in the Hull \& White and Black \& Scholes setting we used, recall Proposition \ref{propSimQ}. The volatility parameter used in the Black \& Scholes model is set to $\sigma = 0.3$ while the parameters defining the Hull \& White model are set to $a = 0.05$ and $b = 0.01$. Last, the covariation parameter between the two Brownian motions is set to $\rho = 0$.\\
\noindent In this setting, the nested simulations method was tested with the Put Lookback option described in \ref{sub:desc_prod}, with maturity $T=30$ years. Figure \ref{fig:3} shows the outcome PnL's distribution.

\begin{figure}[!h]
  \centering
  \includegraphics[scale=0.5]{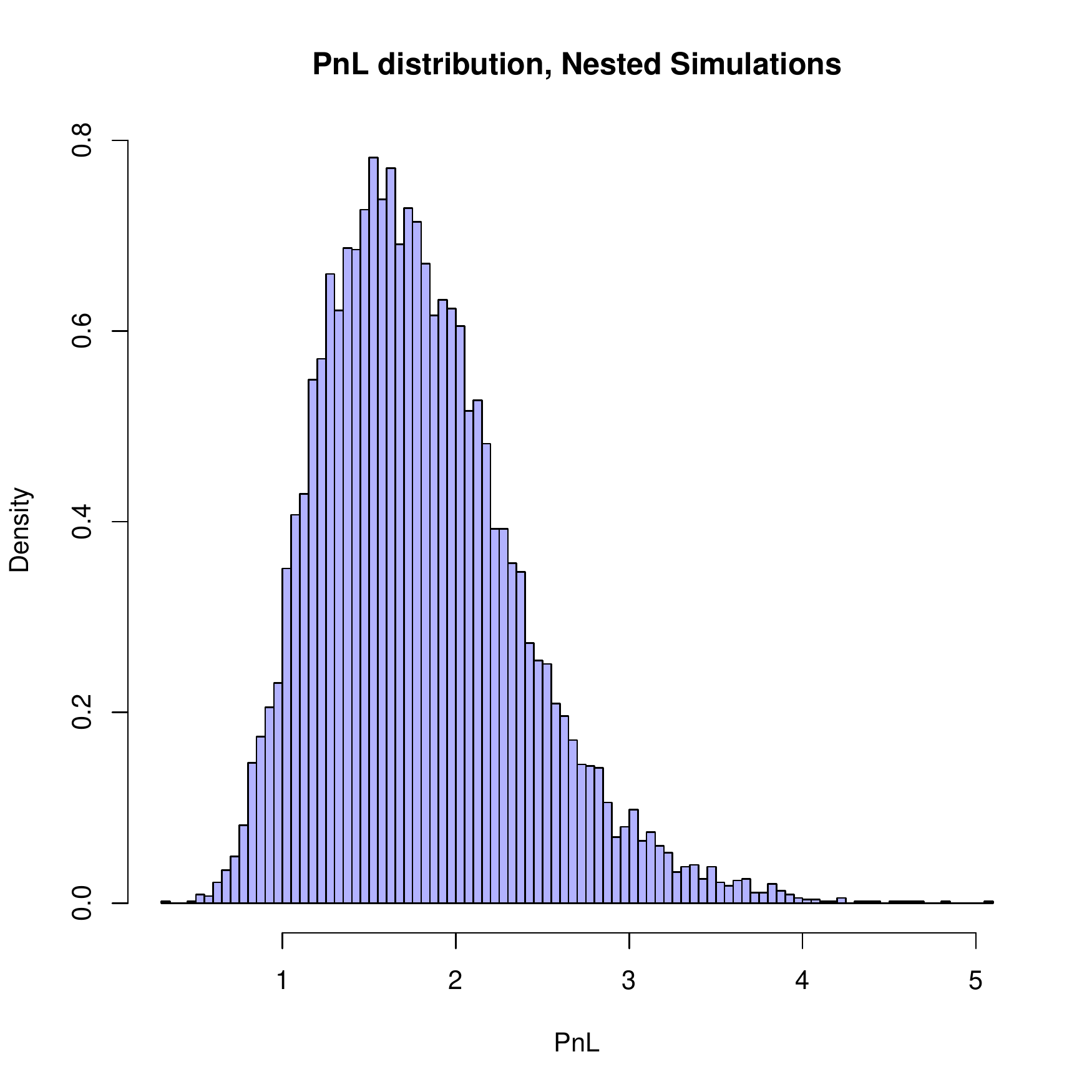}
  \caption{PnL distribution, nested simulations}\label{fig:3}
\end{figure}

We next looked at the grid method. Figure \ref{fig:4} shows the outcome PnL's distributions for sparse grids of level $1, 2, 3$, which respectively have cardinal $81$, $297$, $945$. \textcolor{black}{For each level, we chose the number of risk-neutral simulations $M$ so that the error induced by the Monte Carlo estimation is small compared with the sparse interpolation error, meaning that we can run the program multiple times without changing significantly the outcome. Furthermore, as in the \emph{nested simulations} case, we choose $M$ so that increasing $M$ has no effect on the distribution. Empirically, we chose $M=20000$ for the sparse grid of level $1$, $2$ or $3$.} Figure \ref{fig:5} compares the distribution obtained with nested simulations with the distribution obtained with the sparse grid of level $3$. Table \ref{tab:3} shows computational times comparison, and Table \ref{tab:4} shows V@R and AV@R comparison for the empirical distributions obtained in each case.

\textcolor{black}{We observe that the computational time on the Sparse grid of level $3$ with $M=20000$ is similar to the one for the Nested simulations with only $M=2000$. Moreover, a significant gain in time is obtained by the use of the sparse grid of level $2$ only, which already gives good results, see Table \ref{tab:3}. As already observed in Remark \ref{remPara}, this gain in time can be further improved by parallelisation of the computations.} In addition, we observe that, once the computations on the grid are done, then the PnL distribution is almost straightforwardly obtained. This is a key feature of the method since the computations on the grid are to be kept. Indeed, if one needs to change the distribution of $(S,\Theta)$ under $\P$, say because the view of the risk management on the evolution of the market parameters has changed, then they can be re-used easily. In the next section, we give an application in this direction.

\begin{figure}[!h]
  \centering
  \includegraphics[scale=0.5]{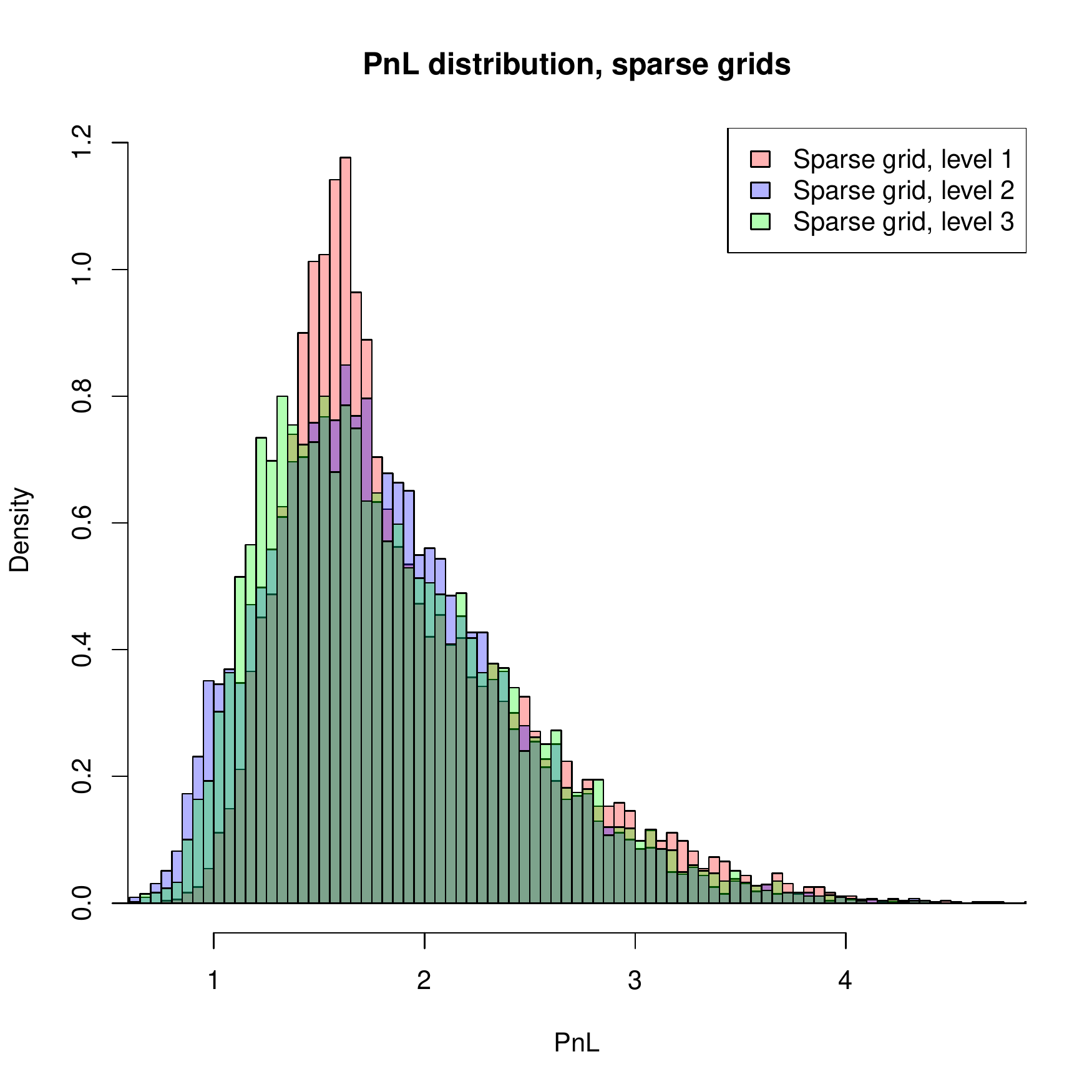}
  \caption{PnL distribution, sparse grids}\label{fig:4}
\end{figure}
  \begin{figure}[!h]
    \centering
    \includegraphics[scale=0.5]{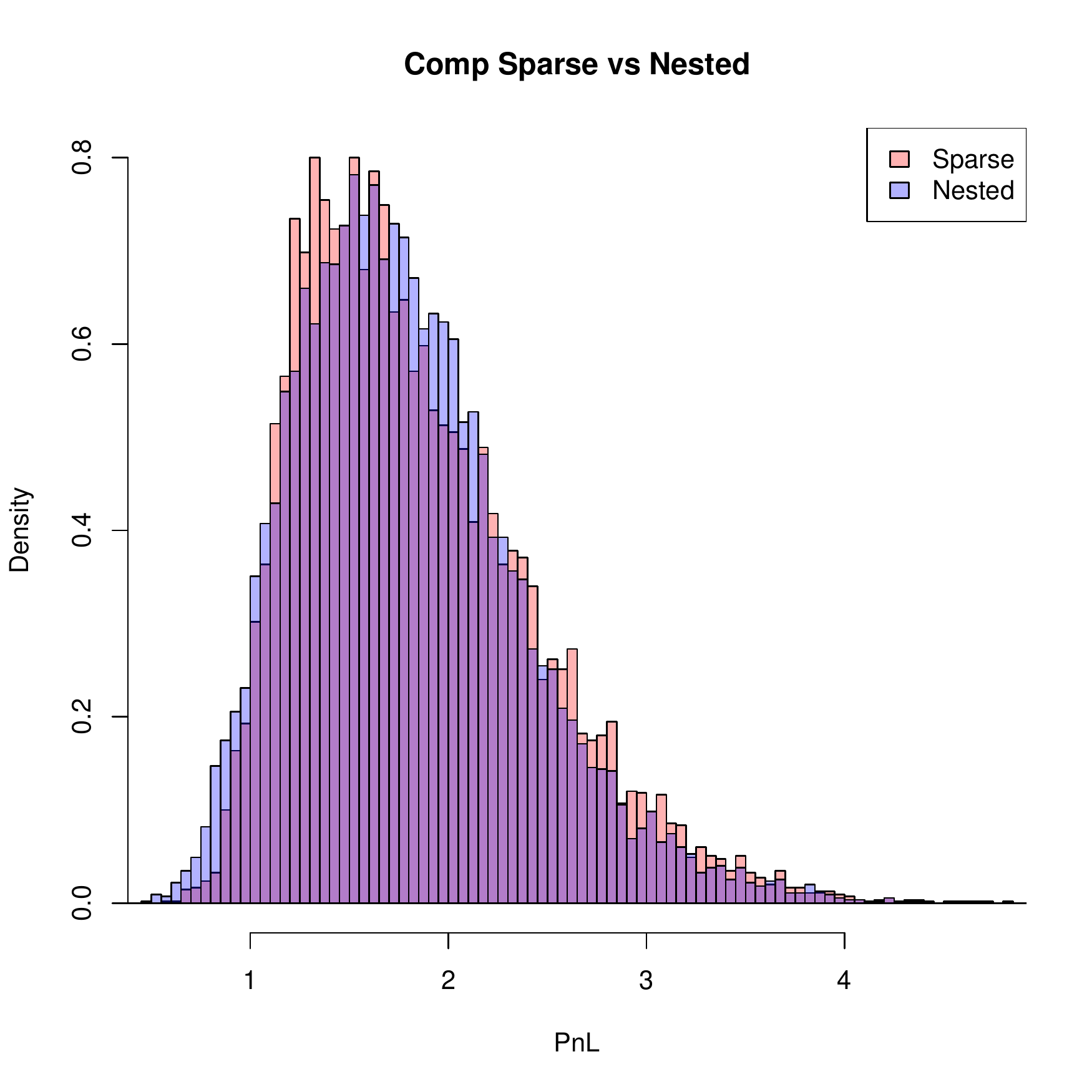}
    \caption{PnL distribution, nested simulations versus sparse grid method}\label{fig:5}
  \end{figure}
  \begin{table}[!h]
    \centering
    \begin{tabular}{|l|l|l|l|l|}
      \hline
      Level of the sparse grid & $l=1$ & $l=2$ & $l=3$ & Nested simulations\\
      \hline
      Computations on the grid & 9 min 30 sec & 35 min 10 sec & 1h 58 min& \\
      \hline
      Computation of the PnL distribution & 2 sec & 4 sec & 8 sec & 2h 15min\\
      \hline
    \end{tabular}
    \caption{Computational times}\label{tab:3}
  \end{table}
  \begin{table}
    \centering
    \begin{tabular}{|l|l|l|l|l|}
      \hline
      Level of the sparse grid & $l=1$ & $l=2$ & $l=3$ & Nested simulations\\
      \hline
      Wasserstein distance & 0.13 & 0.06 & 0.05 & 0 \\
      \hline
      V@R/AV@R $\alpha=0.005$ & 3.87 / 4.09 & 3.64 / 3.94 & 3.72 / 3.96 & 3.72 / 4.01 \\
      \hline
      V@R/AV@R $\alpha=0.01$ & 3.68 / 3.93 & 3.43 / 3.73 & 3.54 / 3.79 & 3.48 / 3.80 \\
      \hline
      V@R/AV@R $\alpha=0.05$ & 3.08 / 3.43 & 2.81 / 3.17 & 2.95 / 3.30 & 2.84 / 3.23 \\
      \hline
     V@R/AV@R $\alpha=0.1$ & 2.73 / 3.16 & 2.54 / 2.92 & 2.64 / 3.04 & 2.54 / 2.96 \\
      \hline
  \end{tabular}
  \caption{Comparison of the empirical distributions} \label{tab:4}
\end{table}

\subsection{Model risk} 
As mentioned above, an interesting feature of the \emph{sparse grid} approach developed in this paper is the ability to change the distribution of the processes $X=log(S)$ and $\Theta$ under the \emph{real-world} probability $\P$. In this Section, we use the model described in Section \ref{underP} to simulate a first sample. Then, we consider some uncertainty over the estimated moments $\mu, V$ of $(X_1, \Theta_1)$ used to calibrate the gaussian model: we only assume that the true moments lie in centered intervals around the estimations. In practice, we consider intervals of the form $[m \times 0.95, m \times 1.05]$, where $m$ is the estimated moment under consideration.

To better understand the risk associated with this uncertainty under $\P$, we simulate two ``extreme'' new samples of $(X,\Theta)$, where every moment taken into account to calibrate the model are multiplied by $0.95$ (resp. $1.05$), and, thanks to the grid computations done before with the initial model, we are in position to compute almost instantaneously the empirical PnL distributions associated with these two new samples.

Table \ref{tab:5} shows the Wasserstein distance between the initial distributions and the two obtained for the shifted parameters, and the V@R and AV@R obtained at different quantile levels. We observe that with these small change the distribution are quite close to each other. The main discrepancy are obtained  for the diminished moments.


\begin{table}[!h]
  \centering
  \begin{tabular}{|l|l|l|l|}
    \hline
    Model & Initial model & Diminished moments & Augmented moments \\
    \hline
    Wasserstein distance & 0 & 0.030 & 0.0070 \\
    \hline
    V@R/AV@R $\alpha=0.005$ & 3.37 / 3.63 & 3.43 / 3.81 & 3.33 / 3.67 \\
    \hline
    V@R/AV@R $\alpha=0.01$ & 3.18 / 3.45 & 3.17 / 3.54 & 3.16 / 3.45 \\
    \hline
    V@R/AV@R $\alpha=0.05$ & 2.61 / 2.95 & 2.63 / 2.97 & 2.60 / 2.93 \\
    \hline
    V@R/AV@R $\alpha=0.1$ & 2.34 / 2.71 & 2.37 / 2.73 & 2.32 / 2.69 \\
    \hline
  \end{tabular}
  \caption{Comparison of the empirical distributions} \label{tab:5}
\end{table}

\newpage

\section{Appendix}
\subsection{Proof of \eqref{calibHW}}
In this subsection, we shall give the proof of Proposition 2.1 for completeness. We remind that in the Hull-White model, the dynamics of the short rate is given by the following:
\begin{equation} \label{eq:HullWhite1}
d r^{t,\Theta}_s = a (\mu^{t,\Theta}_s - r^{t,\Theta}_s) \ud s + b \ud B_s
\end{equation}
with $a, b \in \R$. We will prove that the mean-reverting $\theta_s$ can be calibrated by forward interest rate curve $f^\Theta(t,s)$ by:
\begin{equation} \label{eq:HullWhiteCalibration}
\mu^{t,\Theta}_s = f^\Theta(t,s) + \frac{1}{a}\frac{\partial f^\Theta(t,s)}{\partial s} + \frac{b^2}{2 a^2} (1- e^{-2a(s-t)})
\end{equation}
The method is to express the price of the zero-coupon bond $P(t,s)$ in the following way:
\begin{equation} \label{eq:HullWhiteCompare}
\E [\exp(-\int_t^s r^{t,\Theta}_u \ud u)] = P(t,s) = \exp(-\int_t^s f^\Theta(t,u) \ud u)
\end{equation}
Then by comparing both sides, we can determine $f^\Theta(t,s)$. First, it is easy to find out that the solution to \eqref{eq:HullWhite1} is 
$$
r^{t,\Theta}_s= r^{t,\Theta}_t e^{-a(s-t)} + a \int_t^s \mu^{t,\Theta}_u e^{-a(s-u)} \ud u + b \int_t^s e^{-a(s-u)} \ud B_u
$$
Then by straightforward calculation we have that 
$$
\int_t^s r^{t,\Theta}_u \ud u = \frac{r^{t,\Theta}_t}{a} (1-e^{-a(s-t)}) + \int_t^s \mu^{t,\Theta}_u (1-e^{-a(s-u)}) \ud u + \frac{b}{a} \int_t^s (1-e^{-a(s-u)}) \ud B_u
$$
So $\int_t^s r^{t,\Theta}_u \ud u$ follows a normal distribution with mean 
$$
\E [\int_t^s r^{t,\Theta}_u \ud u] = \frac{r^{t,\Theta}_t}{a} (1-e^{-a(s-t)}) + \int_t^s \mu^{t,\Theta}_u (1-e^{-a(s-u)}) \ud u
$$
and variance
$$
\mathbb{V} [\int_t^s r^{t,\Theta}_u \ud u] = \frac{b^2}{a^2} \int_t^s (1-e^{-a(s-u)})^2 \ud u
$$
Now comparing the both sides of \eqref{eq:HullWhiteCompare}, we have that 
$$
\int_t^s f^\Theta(t,u) \ud u = \E [\int_t^s r^{t,\Theta}_u \ud u] - \frac{1}{2} \mathbb{V} [\int_t^s r^{t,\Theta}_u \ud u] 
$$
Thus 
\begin{eqnarray} \label{eq:fm}
f^\Theta(t,s) &=& \frac{\partial}{\partial s} \E [\int_t^s r^{t,\Theta}_u \ud u] - \frac{1}{2} \frac{\partial}{\partial s} \mathbb{V} [\int_t^s r^{t,\Theta}_u \ud u] \nonumber \\
&=& r^{t,\Theta}_t e^{-a(s-t)} + a \int_t^s \mu^{t,\Theta}_u e^{-a(s-u)} \ud u - \frac{b^2}{2a^2} 2a \int_t^s (e^{-a(s-u)}-e^{-2a(s-u)}) \ud u \nonumber \\
&=& r^{t,\Theta}_t e^{-a(s-t)} +a \int_t^s \mu^{t,\Theta}_u e^{-a(s-u)} du - \frac{b^2}{2a^2} (1-e^{-a(s-t)})^2
\end{eqnarray}
By straightforward differentiation, we have 
\begin{equation} \label{eq:fmderivetive}
\frac{\partial}{\partial s} f^\Theta(t,s) = -a r^{t,\Theta}_t e^{-a(s-t)} - a^2 \int_t^s \mu^{t,\Theta}_u e^{-a(s-u)} \ud u + a \mu^{t,\Theta}_s - \frac{b^2}{a} (e^{-a(s-t)}-e^{-2a(s-t)})
\end{equation}
Now by \eqref{eq:fm} and \eqref{eq:fmderivetive}, we can easily verify that \eqref{eq:HullWhiteCalibration} is valid.

\subsection{Proof of Proposition \ref{propP}}

We provide a recursive proof of Proposition \ref{propP}, which allows to compute effectively the coefficients defining the processes.

Suppose more generally that a vector $\mu \in \R^n$ and a covariance matrix $V \in \R^{n\times n}$ is given. We look for $n$ processes $X^i (i=0,\dots,n)$ defined by:
\begin{align}
  X^i_t = X^i_0 + b_i t + \sum_{j = 1}^n c_{ij} W^j_t,
\end{align}
where $W^j_t (j = 1, \dots, n)$ are $n$ independant Brownian motions, and $b \in \R^n, C = (c_{ij}) \in \R^{n\times n}$.

\begin{proposition}\label{proofP}There is at most one $(b,C) \in \R^n \times \R^{n\times n}$ such that:
  \begin{itemize}
  \item $c_{ij} = 0$ whenever $i > j$,
  \item $\esp{X^i_1} = \mu_i (i = 1, \dots, n)$,
  \item $Cov(X^i_1, X^j_1) = V_{ij} (i,j = 1, \dots, n)$.
  \end{itemize}
\end{proposition}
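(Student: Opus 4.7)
The plan is to separate the problem into determining the drift vector $b$ and the dispersion matrix $C = (c_{ij})$, and to proceed by descending induction on the row index.

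First, taking expectations in the defining SDE at $t=1$ yields $\E[X^i_1] = X^i_0 + b_i$. Matching this with $\mu_i$ forces
\[
b_i = \mu_i - X^i_0,
\]
which is unique, so only $C$ remains. Next, by independence of the $W^j$ and the Itô isometry applied to the deterministic integrands $c_{ij}$,
\[
\mathrm{Cov}(X^i_1,X^j_1) \;=\; \sum_{k=1}^n c_{ik} c_{jk} \;=\; (CC^\top)_{ij},
\]
so the covariance condition is exactly the matrix equation $V = CC^\top$ with $C$ upper triangular (since $c_{ij}=0$ for $i>j$). Adopting the convention that diagonal entries $c_{ii}$ are non-negative (which is the natural normalisation, as only the sign of each $W^j$ is ambiguous in the model), uniqueness then amounts to the uniqueness of the (transposed) Cholesky factor of $V$.

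I would prove this by descending induction on $i \in \{n,n-1,\dots,1\}$. For the base case $i=n$: only $c_{nn}$ is nonzero in that row, and $V_{nn} = c_{nn}^2$ gives $c_{nn} = \sqrt{V_{nn}}$ uniquely. For the inductive step, assume rows $i+1,\dots,n$ are uniquely determined. Using $V = CC^\top$ with $C$ upper triangular, for $j > i$,
\[
V_{ij} \;=\; \sum_{k=j}^{n} c_{ik}\,c_{jk} \;=\; c_{ij}\,c_{jj} \;+\; \sum_{k=j+1}^{n} c_{ik}\,c_{jk},
\]
so given that the $c_{jk}$ with $k\ge j$ are already known and $c_{jj}>0$, this equation determines $c_{ij}$ uniquely, solved in the order $j=n,n-1,\dots,i+1$. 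Finally,
\[
V_{ii} \;=\; c_{ii}^{\,2} \;+\; \sum_{k=i+1}^{n} c_{ik}^{\,2},
\]
which determines $c_{ii}$ uniquely under the sign convention.

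The main obstacle is the sign indeterminacy of the diagonal entries $c_{ii}$ and the degenerate case where some $c_{jj}$ vanishes (which would make the recursion for off-diagonal entries ill-posed). The first is handled cleanly by the $c_{ii}\ge 0$ normalisation; the second is avoided as soon as $V$ is assumed positive definite, in which case the induction also shows by the Schur-complement structure of $V$ that $c_{ii}^2$ is strictly positive at every step. If one only has $V$ positive semi-definite, uniqueness can still be recovered by adding the convention that the whole row $i$ vanishes whenever $c_{ii}=0$, which is the usual remedy in the rank-deficient Cholesky setting.
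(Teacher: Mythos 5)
Your argument is correct and is essentially the paper's own proof: fix $b_i=\mu_i-X^i_0$ from the means, then recover the upper-triangular $C$ from the relation $V=CC^\top$ by the Cholesky-type recursion starting at the bottom-right corner (the paper organises the very same computation as an ``ascending step'' / ``back step'' recursion, so your row-by-row descending induction is just a reordering of its loop). Your explicit treatment of the sign normalisation $c_{ii}\ge 0$ and of the degenerate case $c_{jj}=0$ (requiring $V$ positive definite, or a rank-deficient convention) makes precise points that the paper's proof leaves implicit when it divides by $c_{ll}$ and takes the square root.
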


\begin{proof}
  We have $\esp{X^i_1} = X^i_0 + b_i$, so $b_i := \mu_i - X^i_0$ ensures $\esp{X^i_1} = \mu_i$ for all $i$.

  We next determine the matrix $C$ thanks to a recursive algorithm:
  \paragraph{Ascending step:} Let $i,l \in \{1, \dots, n\}$, and assume $c_{ik}, k > l$ and $c_{lk}, k \ge l$ are determined. Then we can determine $c_{il}$.

  Indeed, if $i > l$, we set $c_{il} = 0$. If $i < l$, we have:
  \begin{align} V_{il} &= Cov(X^i_1, X^l_1) \\ &= \sum_{j = 1}^n c_{ij} c_{lj}\\ &= \sum_{j = l}^n c_{ij} c_{lj} \\ &= c_{il}c_{ll} + \sum_{j > l} c_{ij}c_{lj}. \end{align}
  Thus we set:
  \begin{align} c_{il} = \frac{1}{c_{ll}} \left( V_{il} - \sum_{j > l} c_{ij} c_{lj} \right).\end{align}

  \paragraph{Back step:} Let $l \in \{1,\dots,n\}$ and assume $c_{lj}$ is determined, for $k > l$. Then we can determine $c_{ll}$.

  Indeed:
  \begin{align}V_{ll} = \mathbb{V}(X^l_1) = \sum_{j = 1}^n c_{lj}^2 = \sum_{j = l}^n c_{lj}^2 = c_{ll}^2 + \sum_{j > l} c_{lj}^2.
  \end{align}
  Thus we set:
  \begin{align}
    c_{ll} = \sqrt{ V_{ll} - \sum_{j > l} c_{lj}^2 }.
  \end{align}
\end{proof}

\subsection{Proofs of Lemma \ref{lem} and Proposition \ref{propSimQ}}

We prove here the Lemma \ref{lem} and the Proposition \ref{propSimQ}, which give a recursive procedure to simulate exactly under $\Q$.

\begin{proof}[Proof of Lemma \ref{lem}]
  Let $(t,\Theta) \in [0,T] \times \R^3$ and consider the process $r^{t,\Theta} = (r^{t,\Theta}_s)_{s \in [t,T]}$ defined by \eqref{HW}-\eqref{calibHW}.
  
  Let $s \in [t,T]$. An application of Itô's formula gives:
  \begin{align}
    e^{as}r^{t,\Theta}_s = e^{at}r^{t,\Theta}_t + a \int_t^s e^{au} \mu^{t,\Theta}_u \ud u + b \int_t^s e^{au} \ud B_u,
  \end{align}
  and an easy computation using equality \eqref{calibHW} shows that:
  \begin{align}
    a \int_t^s e^{-a(s-u)} \mu^{t,\Theta}_u \ud u = \alpha^{t,\Theta}_s - \alpha^{t,\Theta}_t e^{-a(s-t)},
  \end{align}
  where $\alpha^{t,\Theta}$ is the defined by \eqref{eqalpha}. 

  In addition, if $\xi^t$ is defined by \eqref{eqxi}, applying Itô's formula again gives:
  \begin{align}
    \xi^t_s = b\int_t^s e^{-a(s-u)} \ud B_u.
  \end{align}

  Thus:
  \begin{align}
    r^{t,\Theta}_s = e^{-a(s-t)}r^{t,\Theta}_t + \alpha^{t,\Theta}_s - e^{-a(s-t)} \alpha^{t,\Theta}_t + \xi^t_s,
  \end{align}
  which ends the proof as $r^{t,\Theta}_t = \alpha^{t,\Theta}_t$, by \eqref{eq:fm}.
\end{proof}

We now turn to the proof of Proposition \ref{propSimQ}.

\begin{proof}[Proof of Proposition \ref{propSimQ}]
  Let $t \le s \le u \le T$. Itô's formula implies that the triplet $(\xi^t_r, A^{t,s}_r, X^{t,x,\Theta}_r)_{r \in [s,u]}$ is the solution of the following linear stochastic differential equation:
\begin{align}
d\mathop{\begin{pmatrix}
    \xi_r \\
    A_r \\
    X_r
\end{pmatrix}}
=
\left[
\mathop{\begin{pmatrix}
    -a & 0 & 0 \\
    1  & 0 & 0 \\
    1  & 0 & 0  \\
\end{pmatrix}}
\mathop{\begin{pmatrix}
    \xi_r \\
    A_r \\
    X_r
\end{pmatrix}}
+
\mathop{\begin{pmatrix}
    0 \\
    0 \\
    \alpha^{t,\Theta}_r-\frac{\sigma^2}{2}
\end{pmatrix}}
\right]
\ud t
+
\mathop{\begin{pmatrix}
    b & 0 \\
    0 & 0  \\
    \sigma \rho & \sigma \sqrt{1-\rho^2}
\end{pmatrix}}
\mathop{\begin{pmatrix}
    \ud B_r  \\
    \ud W_r 
\end{pmatrix}}, r \in [s,u],
\end{align}
with the initial conditions $\xi_s = \xi^t_s, A_s = 0, X_s = X^{t,x,\Theta}_s$.
This linear equation has a closed form solution, and we find: 
\begin{align}
  \xi^t_u &= e^{-a(u-s)}\xi^t_s + b\int_s^{u} e^{-a(u-r)} \ud B_r, \\
  A^{t,s}_u &= \frac{\xi_s^t}{a}(1-e^{-a(u-s)}) + \frac{b}{a}\int_{s}^{u} (1-e^{-a(u-r)}) \ud B_r, \\
  X^{t,x,\Theta}_u &= X^{t,x,\Theta}_s + \int_s^u \alpha^{t,\Theta}_r \ud r - \frac{\sigma^2}{2}(u-s)+ \frac{\xi^t_s}{a}(1-e^{-a(u-s)}) + \int_s^u\sqrt{1-\rho^2}\sigma \ud W_r \\
      & + \frac{b}{a}\int_{s}^{u} (1-e^{-a(u-r)}) \ud B_r + \int_s^u \rho \sigma \ud B_r.
\end{align}
Conditionaly upon $\cF_s$, the vector $(\xi^t_u, A^{t,s}_u, X^{t,x,\Theta}_u)$ is Gaussian and the expectations and covariations given in the Proposition are easily computed thanks to the above formulae.
\end{proof}

\subsection{Comparison with Automatic Differentiation.}  
\label{subse automatic differentiation}
We use the stan math C++ library \cite{Stan15} which allows to easily implement a (Reverse Mode) Automatic Differentiation procedure in order to deduce the derivatives directly from the Monte-Carlo computation of the function $L$. We compare the results obtained with the weights method developed here with the results obtained by Automatic Differentiation. We also provide a comparison about the computational times.

Precisely, we computed the derivatives of $\ell$ with respect to the $4$ variables ($x,\theta_1,\theta_2,\theta_3$) at $256$ points $(x^i,\theta^i_1,\theta^i_2,\theta^i_3)_{i = 1, \dots, 256}$. In the case of the automatic differentiation, we only take $1000$ risk-neutral simulations to compute $\ell$, while for the approach involving the computation of weights, we took $10000$ simulations to compute $\ell$ and its four derivatives.

Table \ref{tab:alg_comp} sums up the time taken for the computations. Clearly, the gain in time resulting by using the weights algorithm is really significant. Additionally, Figure \ref{fig:2} shows the accuracy in the computation using the weights derivatives in comparison with the Automatic Differentiation.

\begin{table}[h!]
	\centering
	\begin{tabular}{|l|l|l|}
		\hline
		Algorithm - Option & Put Lookback \\ 
		\hline
		Automatic Differentiation & 179 sec \\ 
		Weights & 97 sec \\ 
		\hline
	\end{tabular}
	\caption{Computational times}\label{tab:alg_comp}
\end{table}

\begin{figure}[h!]
	\centering
	\includegraphics[scale=0.5]{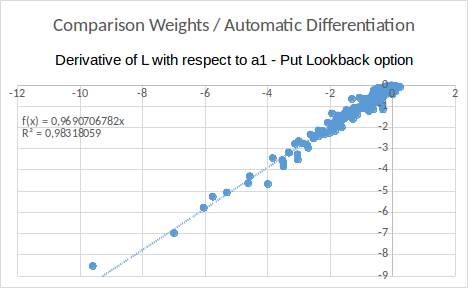}
	
	\includegraphics[scale=0.5]{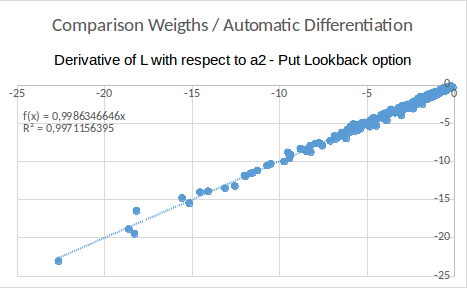}
	\includegraphics[scale=0.5]{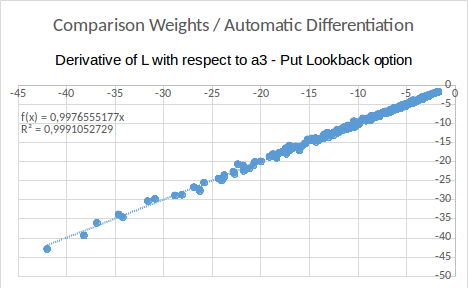}
	\caption{Results from the grid method versus results from Automatic Differentiation}\label{fig:2}
\end{figure}

\section*{Acknowledgements}

The authors would like to thank the organizers of the 2017 CEMRACS for the opportunity to study at CIRM. This work was partially funded in the scope of the research project ``Advanced techniques for non-linear pricing and risk management of derivatives'' under the aegis of the Europlace Institute of Finance, with the support of AXA Research Fund. Finally, we would like to thank the all funding sources that supported us during CEMRACS, in particular University Paris Diderot.

\newpage

\bibliographystyle{plain}
\bibliography{sparsegreek}

\end{document}